\newcommand\Parens[1]{\mleft(#1\mright)}
\newcommand\Squares[1]{\mleft[#1\mright]}
\newcommand\SetCompr[2]{\mleft\lbrace{#1}\,\middle\vert\,{#2}\mright\rbrace}
\newcommand\Var{\mathsf{v}}
\newcommand\Realize[2]{\FmtTm{#1}\mathrel{\normalcolor\Vdash}\FmtVal{#2}}
\newcommand\FmtNamedCat[1]{\mathfrak{#1}}
\newcommand\SIGN{\FmtNamedCat{Sig}}
\newcommand\SET{\FmtNamedCat{Set}}
\newcommand\ARR[1]{{#1}^{\to}}
\newcommand\Cod{\mathsf{cod}}
\newcommand\GlFib[1]{\mathsf{gl}_{#1}}
\newcommand\Ar[1]{{#1}^\dagger}
\NewDocumentCommand\Sorts{O{}}{%
  \mathscr{U}_{\Mute{#1}}%
}
\NewDocumentCommand\Ops{O{}}{%
  \partial_{\Mute{#1}}
}
\newcommand\Ty[1]{\widetilde{#1}}
\newcommand\Psh[1]{\widehat{#1}}
\newcommand\Sh[1]{\mathbf{Sh}\Parens{#1}}
\newcommand\Mute[1]{{\color{SlateGray}#1}}
\newcommand\FmtTm[1]{{\color{DarkSlateBlue}#1}}
\newcommand\FmtNF[1]{{\color{MediumOrchid}#1}}
\newcommand\FmtNE[1]{{\color{MediumOrchid}#1}}
\newcommand\FmtVal[1]{{\color{FireBrick}#1}}
\newcommand\Seq[4]{%
  {#2}\mathrel{\vdash_{\Mute{#1}}}\FmtTm{#3}:{#4}%
}
\newcommand\SeqEq[5]{%
  {#2}\mathrel{\vdash_{\Mute{#1}}}\FmtTm{#3}=\FmtTm{#4}:{#5}%
}
\newcommand\Clone[2]{\mathsf{Cn}_{\Mute{#1}}\Parens{#2}}
\newcommand\SbClone[2]{\mathsf{Sb}_{\Mute{#1}}\Parens{#2}}
\newcommand\ClCat[1]{\mathsf{Cl}_{\Mute{#1}}}
\NewDocumentCommand\Hom{O{}mm}{{#1}\Squares{#2,#3}}
\newcommand\Ren[1]{\mathsf{Ren}_{\Mute{#1}}}
\newcommand\Yo{\mathbf{y}}
\newcommand\FancyYo{[\mathbf{y}]}
\newcommand\TM{\mathfrak{Tm}}
\newcommand\SeqNE[4]{{#2}\mathrel{\vdash^{\mathsf{ne}}_{\Mute{#1}}}\FmtNE{#3}:#4}
\newcommand\SeqNF[4]{{#2}\mathrel{\vdash^{\mathsf{nf}}_{\Mute{#1}}}\FmtNF{#3}:#4}
\newcommand\NE[1]{\mathfrak{Ne}_{#1}}
\newcommand\NF[1]{\mathfrak{Nf}_{#1}}
\newcommand\VAR[1]{\mathcal{V}\Parens{#1}}
\NewDocumentCommand\SbEmp{O{}}{{[\,]}_{\Mute{#1}}}
\newcommand\GlCat[1]{\mathsf{Gl}_{#1}}
\newcommand\Sem[1]{\llbracket{}#1\rrbracket}
\newcommand\Pred[1]{\mathcal{R}_{#1}}
\NewDocumentCommand\Quo{mo}{%
  \FmtTm{\mathsf{quo}_{\normalcolor #1}}%
  \IfValueT{#2}{%
    \FmtVal{\Parens{#2}}%
  }%
}
\NewDocumentCommand\RbNF{O{}mo}{%
  \FmtTm{\mathsf{Rnf}^{\Mute{#1}}_{\normalcolor #2}}%
  \IfValueT{#3}{%
    \FmtNF{\Parens{#3}}%
  }%
}
\NewDocumentCommand\RbNE{O{}mo}{%
  \FmtTm{\mathsf{Rne}^{\Mute{#1}}_{\normalcolor #2}}%
  \IfValueT{#3}{%
    \FmtNE{\Parens{#3}}%
  }
}
\NewDocumentCommand\Reify{O{}mo}{%
  \FmtNF{\downarrow_{\Mute{#1}}^{\normalcolor #2}}%
  \IfValueT{#3}{%
    \FmtVal{\Parens{#3}}%
  }%
}
\NewDocumentCommand\Reflect{O{}mo}{%
  \FmtVal{\uparrow_{\Mute{#1}}^{\normalcolor #2}}%
  \IfValueT{#3}{%
    \FmtNE{\Parens{#3}}%
  }%
}
\NewDocumentCommand\NfFun{mmo}{%
  \FmtNF{%
    \mathbf{nf}_{\Mute{#1}}^{\normalcolor#2}%
  }%
  \IfValueT{#3}{%
    \FmtTm{\Parens{#3}}
  }%
}
\NewDocumentCommand\SbIdn{O{}m}{%
  \mathsf{id}^{\Mute{#1}}_{\normalcolor #2}%
}
\newcommand\SbProj{\mathsf{p}}
\newcommand\SbLift[1]{{\Uparrow}[#1]}
\newcommand\SbExt[2]{{#1}.{#2}}
\newcommand\ObjLam[2]{%
  \lambda^{\Mute{#1}}\Parens{#2}%
}
\newcommand\MetaLam[2]{
  {\normalcolor\boldsymbol{\lambda}}{#1}.\,{#2}
}
\title{
  Normalization by gluing for free $\lambda$-theories
}
\author{
  Jonathan Sterling\thanks{\texttt{jmsterli@cs.cmu.edu}}\\
  {\small Carnegie Mellon University}
  \and
  Bas Spitters\thanks{\texttt{spitters@cs.au.dk}}\\
  {\small Aarhus University}
}
\begin{document}
\maketitle

\begin{abstract}

  The connection between normalization by evaluation, logical predicates and
  semantic gluing constructions is a matter of folklore, worked out in varying
  degrees within the literature. In this note, we present an elementary version
  of the gluing technique which corresponds closely with both semantic
  normalization proofs and the syntactic normalization by evaluation

\end{abstract}

We will expand in more detail the insight presented in Streicher's short
note~\citep{streicher:1998} and in~\citet{fiore:2002}, giving some explicit
constructions. We will be considering the case of free $\lambda$-theories
generated from many-typed first-order signatures.

\section{\texorpdfstring{$\lambda$}{Lambda}-signatures and \texorpdfstring{$\lambda$}{lambda}-theories}

\begin{definition}[Arity]

  A simply-typed first-order arity for a set $T$ of atomic types is a pair
  $\alpha\equiv\Parens{\vec{\sigma},\tau}$ of a list of types and a types. We write
  $\Ar{T}\eqdef{}T^\star\times T$ for the set of such arities.

\end{definition}

\begin{definition}[Many-typed signature]

  Following~\citet{jacobs:1999}, a \emph{many-typed signature}
  $\Sigma\equiv\Parens{\Sorts,\Ops}$ is a set of atomic types $\Sorts$ together with an
  arity-indexed family of sets of operations $\Ops$, taking each operation $\vartheta$ to $\Ops(\vartheta)\in\Ar{\Sorts}$.  More abstractly, the category $\SIGN$ of such signatures
  arises as the pullback of the fundamental fibration along the arity
  endofunctor.
  \[
    \begin{tikzcd}[sep=huge]
      \SIGN
      \arrow[r, "\Ops"]
      \arrow[d, -{Triangle[open]}, swap, "\Sorts"]
      \arrow[dr, phantom, pos = 0, "\lrcorner"]
      & \ARR{\SET}
      \arrow[d, -{Triangle[open]}, "\Cod"]
      \\
      \SET
      \arrow[r, swap,"\Ar{\Parens{-}}"]
      & \SET
    \end{tikzcd}
  \]

\end{definition}

From a collection of atomic types $\Sorts$, we generate the type structure of the
$\lambda$-calculus as the least set $\Ty{\Sorts}$ closed under the following
formation rules:
\begin{mathpar}
  \inferrule{
    \tau\in\Sorts
  }{
    \tau\in\Ty{\Sorts}
  }
  \and
  \inferrule{
    \sigma\in\Ty{\Sorts}
    \\
    \tau\in\Ty{\Sorts}
  }{
    {\sigma\times\tau}\in\Ty{\Sorts}
  }
  \and
  \inferrule{
    \sigma\in\Ty{\Sorts}
    \\
    \tau\in\Ty{\Sorts}
  }{
    {\sigma\to\tau}\in\Ty{\Sorts}
  }
\end{mathpar}

\subsection{The clone of a \texorpdfstring{$\lambda$}{lambda}-signature}

From a $\lambda$-signature $\Sigma$ we can freely generate a special family of
sets $\Clone{\Sigma}{\Gamma,\tau}$ called its \emph{clone}, indexed in
$\Parens{\Gamma,\tau}\in\Ar{\Ty{\Sorts[\Sigma]}}$; simultaneously, we define the indexed set
of substitutions $\SbClone{\Sigma}{\Gamma,\Delta}$, indexed in
$\Parens{\Gamma,\Delta}\in\Ty{\Sorts[\Sigma]}^\star\times\Ty{\Sorts[\Sigma]}^\star$. In our
presentation, we choose to use \emph{explicit substitutions} rather than
implicit substitutions, because they are more abstract and allow a formulation
without explicit reference to preterms. Additionally, explicit substitutions
scale up to the metatheory of dependent type theory in a way that the implicit
(admissible) notion of substitution on preterms cannot.

We will write $\Seq{\Sigma}{\Gamma}{t}{\tau}$ and
$\Seq{\Sigma}{\Gamma}{\delta}{\Delta}$ to mean that
$t\in\Clone{\Sigma}{\Gamma,\tau}$ and
$\delta\in\SbClone{\Sigma}{\Gamma,\Delta}$ respectively, and
$\SeqEq{\Sigma}{\Gamma}{s}{t}{\tau}$ and
$\SeqEq{\Sigma}{\Gamma}{\delta_0}{\delta_1}{\Delta}$ to mean that $s$ and $t$
are equal elements of $\Clone{\Sigma}{\Gamma,\tau}$ and $\delta_0$ and
$\delta_1$ are equal elements of $\SbClone{\Sigma}{\Gamma,\Delta}$
respectively. In our notation, we \emph{presuppose} that $s$ and $t$ are
elements of the clone when we state that they are equal.

The clone of $\Sigma$ is defined as a quotient using the following indexed
inductive definition:

\begin{mathpar}
  \inferrule[variable]{}{
    \Seq{\Sigma}{\Gamma,\tau}{\Var}{\tau}
  }
  \and
  \inferrule[operation]{
    \Ops[\Sigma](\vartheta) \equiv (\Gamma,\tau)
  }{
    \Seq{\Sigma}{\Gamma}{\vartheta}{\tau}
  }
  \and
  \inferrule[subst]{
    \Seq{\Sigma}{\Gamma}{\delta}{\Delta}
    \\
    \Seq{\Sigma}{\Delta}{t}{\tau}
  }{
    \Seq{\Sigma}{\Gamma}{t[\delta]}{\tau}
  }
  \\
  \inferrule[abstraction]{
    \Seq{\Sigma}{\Gamma,\sigma}{t}{\tau}
  }{
    \Seq{\Sigma}{\Gamma}{
      \ObjLam{\sigma}{t}
    }{\sigma\to\tau}
  }
  \and
  \inferrule[application]{
    \Seq{\Sigma}{\Gamma}{s}{\sigma\to\tau}
    \\
    \Seq{\Sigma}{\Gamma}{t}{\sigma}
  }{
    \Seq{\Sigma}{\Gamma}{s(t)}{\tau}
  }
  \and
  \inferrule[pair]{
    \Seq{\Sigma}{\Gamma}{s}{\sigma}
    \\
    \Seq{\Sigma}{\Gamma}{t}{\tau}
  }{
    \Seq{\Sigma}{\Gamma}{(s,t)}{\sigma\times\tau}
  }
  \and
  \inferrule[proj1]{
    \Seq{\Sigma}{\Gamma}{s}{\sigma\times\tau}
  }{
    \Seq{\Sigma}{\Gamma}{s.1}{\sigma}
  }
  \and
  \inferrule[proj2]{
    \Seq{\Sigma}{\Gamma}{s}{\sigma\times\tau}
  }{
    \Seq{\Sigma}{\Gamma}{s.2}{\tau}
  }
  \\
  \inferrule[sb/idn]{}{
    \Seq{\Sigma}{\Gamma}{\SbIdn{\Gamma}}{\Gamma}
  }
  \and
  \inferrule[sb/proj]{}{
    \Seq{\Sigma}{\Gamma,\tau}{\SbProj}{\Gamma}
  }
  \and
  \inferrule[sb/ext]{
    \Seq{\Sigma}{\Gamma}{\delta}{\Delta}
    \\
    \Seq{\Sigma}{\Gamma}{t}{\tau}
  }{
    \Seq{\Sigma}{\Gamma}{\SbExt{\delta}{t}}{\Delta,\tau}
  }
  \and
  \inferrule[sb/cmp]{
    \Seq{\Sigma}{\Gamma}{\delta}{\Delta}
    \\
    \Seq{\Sigma}{\Delta}{\xi}{\Xi}
  }{
    \Seq{\Sigma}{\Gamma}{\xi\circ\delta}{\Xi}
  }
\end{mathpar}

Before we define the equivalence relation by which the clone is quotiented, it
will be useful to define an auxiliary substitution for De Bruijn lifting:
\begin{align*}
  \FmtTm{\SbLift{\gamma}} &\eqdef \FmtTm{\SbExt{(\SbProj\circ\gamma)}{\Var}}
\end{align*}

Next, we generate equivalence relations on terms and substitutions from the following rules:
\begin{mathpar}
  \inferrule[app/beta]{}{
    \SeqEq{\Sigma}{\Gamma}{
      \Parens{\ObjLam{\sigma}{t}}\Parens{s}
    }{
      t[\SbExt{\SbIdn{\Gamma}}{s}]
    }{\tau}
  }
  \and
  \inferrule[abs/eta]{}{
    \SeqEq{\Sigma}{\Gamma}{t}{
      \ObjLam{\sigma}{
        (t[\SbProj{}])(\Var)
      }
    }{\sigma\to\tau}
  }
  \and
  \inferrule[fst/beta]{}{
    \SeqEq{\Sigma}{\Gamma}{(s,t).1}{s}{\sigma}
  }
  \and
  \inferrule[snd/beta]{}{
    \SeqEq{\Sigma}{\Gamma}{(s,t).2}{t}{\tau}
  }
  \and
  \inferrule[pair/eta]{}{
    \SeqEq{\Sigma}{\Gamma}{t}{(t.1,t.2)}{\sigma\times\tau}
  }
  \and
  \inferrule[sb/cmp/idn/l]{}{
    \SeqEq{\Sigma}{\Gamma}{\SbIdn{\Delta}\circ\delta}{\delta}{\Delta}
  }
  \and
  \inferrule[sb/cmp/idn/r]{}{
    \SeqEq{\Sigma}{\Gamma}{\delta\circ\SbIdn{\Gamma}}{\delta}{\Delta}
  }
  \and
  \inferrule[sb/cmp/assoc]{}{
    \SeqEq{\Sigma}{\Gamma}{\gamma\circ(\delta\circ\xi)}{(\gamma\circ\delta)\circ\xi}{\Delta}
  }
  \and
  \inferrule[sb/cmp/proj]{}{
    \SeqEq{\Sigma}{\Gamma}{\SbProj\circ\SbExt{\delta}{t}}{\delta}{\Delta}
  }
  \and
  \inferrule[sb/cmp/dot]{}{
    \SeqEq{\Sigma}{\Gamma}{\delta\circ(\SbExt{\xi}{t})}{\SbExt{(\delta\circ\xi)}{t[\delta]}}{\Delta}
  }
  \and
  \inferrule[sb/var/idn]{}{
    \SeqEq{\Sigma}{\Gamma}{\Var[\SbIdn{\Gamma}]}{\Var}{\tau}
  }
  \and
  \inferrule[sb/var/ext]{}{
    \SeqEq{\Sigma}{\Gamma}{\Var[\SbExt{\delta}{t}]}{t}{\tau}
  }
  \and
  \inferrule[sb/abs]{}{
    \SeqEq{\Sigma}{\Gamma}{
      \Parens{\ObjLam{\sigma}{t}}[\delta]
    }{
      \ObjLam{\sigma}{
        t[\SbLift{\delta}]
      }
    }{\sigma\to\tau}
  }
  \and
  \inferrule[sb/app]{}{
    \SeqEq{\Sigma}{\Gamma}{(t(s))[\delta]}{t[\delta](s[\delta])}{\tau}
  }
  \and
  \inferrule[sb/pair]{}{
    \SeqEq{\Sigma}{\Gamma}{(s,t)[\delta]}{(s[\delta],t[\delta])}{\sigma\times\tau}
  }
  \and
  \inferrule[sb/proj1]{}{
    \SeqEq{\Sigma}{\Gamma}{t.1[\delta]}{(t[\delta]).1}{\sigma}
  }
  \and
  \inferrule[sb/proj2]{}{
    \SeqEq{\Sigma}{\Gamma}{t.2[\delta]}{(t[\delta]).2}{\tau}
  }
\end{mathpar}

We omit the congruence cases for brevity. The clone of $\Sigma$ is now defined
as the indexed family of quotients generated by the formation and definitional
equivalence rules given above.

\paragraph{Representation of the quotient}

In these notes, we will not dwell on the technical representation of the
quotiented terms. However, we will remark that the most convenient induction
principle for the quotiented syntax would arrive from a presentation as a
\emph{quotient inductive type}; moreover, our inductive definition falls under
a schema for \emph{finitary} quotient inductive types which is already known to
be interpretable in the setoid model of type
theory~\citep{dybjer-moeneclaey:2018}.

\subsection{The classifying category of a \texorpdfstring{$\lambda$}{lambda}-signature}

We can see that the language of substitutions above has the structure of
category; this category is in fact called $\ClCat{\Sigma}$, the
\emph{classifying category} or \emph{Lawvere category} of the
$\lambda$-signature $\Sigma$. The classifying category is also just
called the (pure) $\lambda$-\emph{theory} generated by the signature.
Concretely, it has contexts $\Gamma$ as objects, and equivalence classes of
substitutions $\Seq{\Sigma}{\Gamma}{\delta}{\Delta}$ as morphisms.

\begin{proposition}
  The classifying category $\ClCat{\Sigma}$ is cartesian closed.
\end{proposition}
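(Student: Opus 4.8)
The plan is to exhibit the cartesian-closed structure—terminal object, binary products, and exponentials—directly from the substitution calculus, verifying each universal property against the generated equational theory. As a preliminary step I would establish the two structural substitution lemmas left implicit by the explicit-substitution presentation, namely $\SeqEq{\Sigma}{\Gamma}{t[\SbIdn{\Gamma}]}{t}{\tau}$ and $\SeqEq{\Sigma}{\Gamma}{t[\delta][\gamma]}{t[\delta\circ\gamma]}{\tau}$ (and likewise for substitutions), each by induction on the derivation of $t$: the leaves are handled by sb/var/idn and sb/var/ext, and the remaining cases by pushing the substitution inward with sb/abs, sb/app, sb/pair, sb/proj1, and sb/proj2. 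Together with sb/cmp/assoc, sb/cmp/idn/l, and sb/cmp/idn/r this confirms that $\ClCat{\Sigma}$ is genuinely a category.

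For finite products I would take the empty context as the terminal object and context concatenation $(\Gamma,\Delta)$ as the binary product. The projections are assembled from iterated $\SbProj$ and the pairing from $\SbExt$; the bijection $\Seq{\Sigma}{\Xi}{\delta}{(\Delta,\tau)}\leftrightarrow\Parens{\Seq{\Sigma}{\Xi}{\SbProj\circ\delta}{\Delta},\ \Seq{\Sigma}{\Xi}{\Var[\delta]}{\tau}}$, whose two directions are witnessed by sb/cmp/proj and sb/var/ext together with the $\eta$-law for substitutions into an extended context, shows that $(\Delta,\tau)$ is the product of $\Delta$ with the one-element context $(\tau)$. Iterating, every context is a finite product of one-element contexts, with the empty context as the (terminal) empty product; uniqueness into the empty context uses the corresponding $\eta$-law there.

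The substance of the proof is the exponential. Since every object is a finite product of one-element contexts, and since in any category with finite products a right adjoint to $-\times B$ for every object $B$ can be assembled by composition from right adjoints to the functors $\Xi\mapsto(\Xi,\sigma)$ (one for each single type $\sigma$), I would reduce to exponentiating by a one-element context. The heart is then the bijection
\[
  \Clone{\Sigma}{(\Xi,\sigma),\tau}\;\cong\;\Clone{\Sigma}{\Xi,\sigma\to\tau},
\]
given by currying $t\mapsto\ObjLam{\sigma}{t}$ and uncurrying $s\mapsto(s[\SbProj])(\Var)$. That these are mutually inverse is witnessed in one direction by abs/eta, and in the other by app/beta after commuting the weakening past the abstraction with sb/abs and simplifying the residual substitution via sb/cmp/dot, sb/cmp/proj, and sb/var/ext. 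Running this componentwise over a target $\Delta=(\tau_1,\dots,\tau_n)$ produces the exponential $\Delta^{(\sigma)}=(\sigma\to\tau_1,\dots,\sigma\to\tau_n)$.

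The step I expect to be the main obstacle is \emph{naturality} of the currying bijection in the parameter context $\Xi$. One must verify that currying commutes with precomposition by an arbitrary $\Seq{\Sigma}{\Xi'}{\gamma}{\Xi}$, that is, that $\Parens{\ObjLam{\sigma}{t}}[\gamma]$ and $\ObjLam{\sigma}{t[\SbLift{\gamma}]}$ agree—precisely the content of sb/abs, but requiring care to match the De Bruijn lifting $\SbLift{\gamma}$ against the weakening built into the extended context. It is this naturality, rather than the pointwise bijection, where the explicit-substitution equations earn their keep, and it is what upgrades the family of bijections to an adjunction $(-)\times(\sigma)\dashv(-)^{(\sigma)}$ of functors.
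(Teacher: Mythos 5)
The paper never actually proves this proposition---it is stated as a bare, folklore fact with no argument attached---so there is no official route to compare yours against. Judged on its own terms, your architecture (empty context as terminal object, context concatenation as binary product, componentwise exponentials obtained from the currying bijection, with naturality supplied by \textsc{sb/abs}) is the standard and correct route to the \emph{intended} theorem.

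However, two steps of your proof claim derivations that do not exist in the equational theory as the paper actually lists it. First, your preliminary induction for $\SeqEq{\Sigma}{\Gamma}{t[\SbIdn{\Gamma}]}{t}{\tau}$ and $\SeqEq{\Sigma}{\Gamma}{t[\delta][\gamma]}{t[\delta\circ\gamma]}{\tau}$ overlooks a leaf: the \textsc{operation} rule. For an operation $\vartheta$ with $\Ops[\Sigma](\vartheta)\equiv(\Gamma,\tau)$, no listed equation mentions $\vartheta$ at all---the substitution rules push substitutions through variables, abstractions, applications, pairs and projections only---so $\FmtTm{\vartheta[\SbIdn{\Gamma}]}$ and $\FmtTm{\vartheta}$ are stuck terms lying in distinct equivalence classes of $\Clone{\Sigma}{\Gamma,\tau}$, and likewise $\FmtTm{\vartheta[\delta][\gamma]}$ versus $\FmtTm{\vartheta[\delta\circ\gamma]}$. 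Your two structural lemmas are therefore not theorems of the presented theory; they must be postulated. Second, the ``$\eta$-law for substitutions into an extended context'', $\SeqEq{\Sigma}{\Xi}{\delta}{\SbExt{(\SbProj\circ\delta)}{\Var[\delta]}}{\Delta,\tau}$, and its empty-context analogue are likewise neither listed nor derivable---this is the familiar independence of surjective pairing in explicit-substitution calculi. In particular, nothing identifies $\SbIdn{\Delta,\tau}$ with $\SbExt{\SbProj}{\Var}$, even though both compose identically with the two projections by \textsc{sb/cmp/idn/r}, \textsc{sb/var/idn}, \textsc{sb/cmp/proj} and \textsc{sb/var/ext}; so with the listed rules alone, context extension fails the \emph{uniqueness} half of the product's universal property, and your currying round-trip---which after \textsc{sb/abs}, \textsc{app/beta}, \textsc{sb/cmp/dot}, \textsc{sb/cmp/proj} and \textsc{sb/var/ext} lands on $\FmtTm{t[\SbExt{\SbProj}{\Var}]}$---cannot be closed off to $\FmtTm{t}$. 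None of this is repaired by a cleverer induction: the missing equations (term-level functoriality of substitution, including its action on operations, and the substitution $\eta$-laws for extended and empty contexts) are independent of the listed ones and must be added to the presentation, as they are in the standard references; once they are, your proof goes through essentially as written.
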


\subsection{The category of renamings}\label{sec:ren-cat}

Every $\lambda$-signature gives rises to another category, namely the
\emph{category of renamings} $\Ren{\Sigma}$. Abstractly, this can be characterized as the
free strictly associative cartesian category generated by $\Ty{\Sorts[\Sigma]}$;
concretely, its objects are contexts $\Gamma$, and its morphisms
$\psi:\Hom[\Ren{\Sigma}]{\Gamma}{\Delta}$ are vectors of projections (indices) from $\Gamma$
into the types in $\Delta$.

An explicit presentation of $\Ren{\Sigma}$ appears
in~\cite{fiore:2002,fiore:2005} as the opposite of the comma category
$\lfloor-\rfloor\downarrow\mathsf{Const}(\Ty{\Sorts[\Sigma]})$, where
$\lfloor-\rfloor:\mathbb{F}\to\SET$ takes a finite cardinal to a set. Fiore
writes $\mathbb{F}\downarrow\Ty{\Sorts[\Sigma]}$ for this comma construction, and
$\mathbb{F}[\Ty{\Sorts[\Sigma]}]$ for its opposite. Another possible presentation of
$\Ren{\sigma}$ is as the subcategory of $\ClCat{\Sigma}$ which has the same
objects, but whose morphisms are all of the form $\SbIdn{\Gamma}$ or
$\SbExt{\SbExt{\SbIdn{\Gamma}}{t_0}}{\cdots{}t_n}$ for some $n> 0$, with $t_i$ of the
form $\Var[\SbProj^k]$, writing $\SbProj^k$ for the $k$-fold composition of
$\SbProj$ with itself.

\section{Normalization and the Yoneda embedding}\label{sec:yoneda}

Working in a constructive metatheory, we can see that the intensional content
of a certain natural isomorphism hides within it a normalization function for
the lambda calculus over $\Sigma$, as observed
in~\cite{cubric-dybjer-scott:1998}. Let $\Psh{\ClCat{\Sigma}}$ denote the
category of presheaves over the classifying category of $\Sigma$.

The Yoneda embedding is a cartesian closed functor
$\Yo:\ClCat{\Sigma}\to\Psh{\ClCat{\Sigma}}$, defined as $\Yo{\Delta} =
\Hom[\ClCat{\Sigma}]{-}{\Delta}$. Within the presheaf topos, it is easiest to think of
the representable objects $\Yo{\Delta}$ as the ``type of substitutions into
$\Delta$''.

There is another way to define the Yoneda embedding, which we will see is
naturally isomorphic to what is written above. In this version, we define a
functor $\FancyYo:\ClCat{\Sigma}\to\Psh{\ClCat{\Sigma}}$ by recursion on the
objects of $\ClCat{\Sigma}$. For an atomic type $\tau\in{}\Sorts[\Sigma]$, $\FancyYo\tau
= \Yo\tau$; but the remainder of the cases are defined using the cartesian
closed structure of the presheaf topos instead of the cartesian closed
structure of the classifying category:
\begin{align*}
  \FancyYo(\sigma\times\tau) &= \FancyYo{\sigma}\times\FancyYo{\tau}
  \\
  \FancyYo(\sigma\to\tau) &= {\FancyYo{\tau}}^{\FancyYo{\sigma}}
\end{align*}

Now, because the Yoneda embedding is cartesian closed, it is easy to see that
we have a natural isomorphism $\Yo\cong\FancyYo$. However, observe that the
elements in the fibers of $\FancyYo$ are not $\lambda$-terms, but a kind of
eta-long Boehm-tree representation of $\lambda$-terms.

That is, whereas the action of $\Yo$ on a syntactic morphism/term
$\Seq{\Sigma}{\Gamma}{t}{\sigma\times\tau}$ is to simply embed $t$ into the
appropriate presheaf fiber, the action of $\FancyYo$ on the same term must take
$t$ to an element of $\FancyYo{\sigma}\times\FancyYo{\tau}$, that is, an actual
\emph{pair}. Considering the case where $t$ is actually a variable, we can see
that the action of these two embeddings is intensionally quite different.
The other side of the natural isomorphism is witnessed by a ``readback''
operation, which reads one of these expanded Boehm trees into a syntactic term
(which can be seen to be $\beta$-normal and $\eta$-long). The normalization
operation obtained by composing these operations can be seen to be an instance
of \emph{normalization by evaluation}.

The problem with this kind of result, however, is that the categories have
quotiented too much for us to be able to say in mathematical (rather than
merely intuitive) language that we have obtained a normalization function. In
particular, the normalization operation that we describe above is actually
equal as a function to the identity. This is because the classifying category
is already quotiented by definitional equivalence.

As summarized in~\citet{streicher:1998}, there are two ways out of this
situation. One is to use a higher-dimensional structure, such as partial
equivalence relations or setoids, in order to structure the ambient category
theory; then, in the spirit of Bishop's constructive mathematics, we can
observe the intension of the normalization operation at the same time as seeing
that it is the identity in its extension. This approach was carried out
in~\cite{cubric-dybjer-scott:1998} using P-category theory, a variant of
E-category theory in which setoids are replaced by PERs.

Another more direct way is obtained from the \emph{gluing construction} in
category theory, where we will choose a different semantic domain which allows
us to see the difference between the two ways of interpreting syntax into the
presheaf category. This was carried out in detail
in~\citet{altenkirch-hofmann-streicher:1995}, but in a manner that is
unfortunately different enough from the classical construction that it is
unclear how it relates. In~\citet{fiore:2002}, normalization by evaluation for
typed lambda calculus is related explicitly to gluing; what we present in these
notes can be seen as an explicit instantiation of Fiore's frameork.

\section{Normalization by gluing}\label{sec:nbg}

To resolve the problem described above in Section~\ref{sec:yoneda}, we will
work with a more refined base category, namely the category of renamings
$\Ren{\Sigma}$ defined in Section~\ref{sec:ren-cat}. First observe that there
is an inclusion of categories $i : \Ren{\Sigma}\to\ClCat{\Sigma}$, since every
context renaming can be represented as a substitution, a sequence of extensions
by variables.

We have a reindexing functor $i^*:\Psh{\ClCat{\Sigma}}\to\Psh{\Ren{\Sigma}}$ by
precomposition. Composing with the Yoneda embedding, we can define a new
functor $\TM:\ClCat{\Sigma}\to\Psh{\Ren{\Sigma}}$:
\[
  \begin{tikzcd}[sep=huge,cramped]
    \ClCat{\Sigma}
    \arrow[d,hook,swap,"\Yo"]
    \arrow[dr,dashed,"\TM"]
    \\
    \Psh{\ClCat{\Sigma}}
    \arrow[r,swap,"i^*"]
    &
    \Psh{\Ren{\Sigma}}
  \end{tikzcd}
\]

\paragraph{Relative hom functor}

As described in~\citet{fiore:2002}, the functor $\TM$ is called the ``relative
hom functor'' of $i$, taking $\Delta:\ClCat{\Sigma}$ to
$\Hom[\ClCat{\Sigma}]{i(-)}{\Delta}$. In~\citet{fiore:2002}, this functor is
written $\langle{i}\rangle$, whereas we write $\TM$ in order to suggest the
intuition that it defines a presheaf of open terms.

We have constructed $\TM$ from the perspective of ``adjusting'' the Yoneda
embedding from $\ClCat{\Sigma}$, but~\citet{fiore:2002} explains another
characterization of the same functor from the perspective of the Yoneda
embedding from $\Ren{\Sigma}$. In particular, $\TM$ is the left Kan extension
of $\Yo:\Ren{\Sigma}\to\Ren{\Sigma}$ along $i$:
\[
  \begin{tikzcd}[cramped]
    \Ren{\Sigma}
    \arrow[rr,hook,"\Yo"]
    \arrow[dr, swap,"i"]
    &
    \arrow[d,phantom,"\Downarrow"]
    &
    \Psh{\Ren{\Sigma}}
    \\
    &
    \ClCat{\Sigma}
    \arrow[ur,dashed,swap,"\TM"]
  \end{tikzcd}
\]

\subsection{Presheaves of neutrals and normals}

In $\Psh{\Ren{\Sigma}}$, we can construct presheaves of neutral terms and
normal terms for each type; note that such presheaves cannot be defined in
$\Psh{\ClCat{\Sigma}}$, because they crucially cannot be closed under arbitrary
substitutions (whereas they happen to be closed under renamings). The fibers of
these presheaves will have the property that the equality relation for their
elements is \emph{discrete}.

To be concrete, let us begin by defining some restricted typing judgments for
neutrals and normals.
\begin{mathparpagebreakable}
  \inferrule[variable]{}{
    \SeqNE{\Sigma}{\Gamma}{\Var[\SbProj^k]}{\Gamma_{\vert{\Gamma}\vert-k-1}}
  }
  \and
  \inferrule[operation]{
    \Ops[\Sigma](\vartheta) \equiv (\Delta,\tau)
    \\
    \SeqNF{\Sigma}{\Gamma}{\delta}{\Delta}
  }{
    \SeqNE{\Sigma}{\Gamma}{\vartheta[\delta]}{\tau}
  }
  \and
  \inferrule[app]{
    \SeqNE{\Sigma}{\Gamma}{t}{\sigma\to\tau}
    \\
    \SeqNF{\Sigma}{\Gamma}{s}{\sigma}
  }{
    \SeqNE{\Sigma}{\Gamma}{t(s)}{\tau}
  }
  \and
  \inferrule[proj1]{
    \SeqNE{\Sigma}{\Gamma}{t}{\sigma\times\tau}
  }{
    \SeqNE{\Sigma}{\Gamma}{t.1}{\sigma}
  }
  \and
  \inferrule[proj2]{
    \SeqNE{\Sigma}{\Gamma}{t}{\sigma\times\tau}
  }{
    \SeqNE{\Sigma}{\Gamma}{t.2}{\tau}
  }
  \\
  \inferrule[shift]{
    \SeqNE{\Sigma}{\Gamma}{t}{\tau}
    \\
    \tau\in{}\Sorts[\Sigma]
  }{
    \SeqNF{\Sigma}{\Gamma}{t}{\tau}
  }
  \and
  \inferrule[abstraction]{
    \SeqNF{\Sigma}{\Gamma,\sigma}{t}{\tau}
  }{
    \SeqNF{\Sigma}{\Gamma}{
      \ObjLam{\sigma}{t}
    }{\sigma\to\tau}
  }
  \and
  \inferrule[pair]{
    \SeqNF{\Sigma}{\Gamma}{s}{\sigma}
    \\
    \SeqNF{\Sigma}{\Gamma}{t}{\tau}
  }{
    \SeqNF{\Sigma}{\Gamma}{(s,t)}{\sigma\times\tau}
  }
  \\
  \inferrule[sb/nf/proj]{}{
    \SeqNF{\Sigma}{\Gamma}{\SbProj^{\vert\Gamma\vert}}{[\,]}
  }
  \and
  \inferrule[sb/nf/ext]{
    \SeqNF{\Sigma}{\Gamma}{\delta}{\Delta}
    \\
    \SeqNF{\Sigma}{\Gamma}{t}{\tau}
  }{
    \SeqNF{\Sigma}{\Gamma}{\SbExt{\delta}{t}}{\Delta,\tau}
  }
  \\
  \inferrule[sb/ne/proj]{}{
    \SeqNE{\Sigma}{\Gamma}{\SbProj^{\vert\Gamma\vert}}{[\,]}
  }
  \and
  \inferrule[sb/ne/ext]{
    \SeqNE{\Sigma}{\Gamma}{\delta}{\Delta}
    \\
    \SeqNE{\Sigma}{\Gamma}{t}{\tau}
  }{
    \SeqNE{\Sigma}{\Gamma}{\SbExt{\delta}{t}}{\Delta,\tau}
  }
\end{mathparpagebreakable}

\paragraph{Admissible substitutions}

We have restricted the language of normal substitutions to consist in vectors
of terms, constructed using the \textsc{sb/proj} and \textsc{sb/ext} rules. The
identity substitution is \emph{admissible} as a neutral substitution, but is
not one of the generators. We define
$\SeqNE{\Sigma}{\Gamma}{\SbIdn{\Gamma}}{\Gamma}$ by recursion on $\Gamma$ as follows:
\begin{align*}
  \FmtNE{\SbIdn{[\,]}} &= \FmtNE{\SbProj^0}
  \\
  \FmtNE{\SbIdn{\Gamma,\tau}} &= \FmtNE{\SbExt{\SbIdn{\Gamma}}{\Var[\SbProj^0]}}
\end{align*}

\paragraph{\texorpdfstring{$\eta$}{Eta}-long normal forms}

Observe that we have ensured an $\eta$-long normal form by restricting the
\textsc{shift} rule above to apply only at atomic types. It is easy to see that
these judgments are closed under context renamings, i.e.\ support a
$\Ren{\Sigma}$-action.
Therefore, we can use these judgments as the raw material from which to build
the presheaves of neutrals and normals for each type $\tau\in\Ty{\Sorts[\Sigma]}$ as
follows:
\begin{align*}
  \NE{\tau} &: \Psh{\Ren{\Sigma}}
  \\
  \NE{\tau}(\Gamma) &\equiv \SetCompr{\FmtNE{t}}{\SeqNE{\Sigma}{\Gamma}{t}{\tau}}
  \\[6pt]
  \NF{\tau} &: \Psh{\Ren{\Sigma}}
  \\
  \NF{\tau}(\Gamma) &\equiv \SetCompr{\FmtNF{t}}{\SeqNF{\Sigma}{\Gamma}{t}{\tau}}
\end{align*}

\subsection{Syntax with binding, internally}

So far we have developed three presheaves of syntax in $\Psh{\Ren{\Sigma}}$:
the presheaf of typed terms $\TM(\tau)$, the presheaf of neutrals $\NE{\tau}$
and the presheaf of normals $\NF{\tau}$. Using the internal language of the
functor category, we can justify a simpler ``higher-order'' notation for
working with elements of these presheaves
internally~\citep{hofmann:1999,fiore-plotkin-turi:1999,staton:2007,harper-honsell-plotkin:1993}.

First observe that exponentiation of a presheaf
$\mathcal{F}:\Psh{\Ren{\Sigma}}$ by a representable has a simpler
characterization using the Yoneda lemma (in fact, this works for any base
category that has finite products):
\begin{align*}
  \mathcal{F}^{\Yo{\Delta}}(\Gamma)
  &\cong \Hom[\Psh{\Ren{\Sigma}}]{\Yo\Gamma\times\Yo\Delta}{\mathcal{F}}
  \\
  &\cong \Hom[\Psh{\Ren{\Sigma}}]{\Yo(\Gamma\times\Delta)}{\mathcal{F}}
  \\
  &\cong \mathcal{F}(\Gamma\times\Delta)
\end{align*}

Writing $\VAR{\tau}$ for the representable presheaf
$\Yo{\tau}:\Psh{\Ren{\Sigma}}$ of variables, we can equivalently use
a higher-order notation for terms from inside the topos, with constructors like
the following:
\begin{align*}
  \FmtNE{\Var} &: \VAR{\tau}\to\NE{\tau}
  \\
  \FmtNF{\lambda^{\Mute\sigma}} &: (\VAR{\sigma}\to\NF{\tau})\to\NF{\sigma\to\tau}
  \\
  &\ldots
\end{align*}

This is justified by the fact that all the generators of $\TM$, $\NE{}$ and
$\NF{}$ commute with the presheaf renaming action.  When working internally, we
will implicitly use these notations as a simplifying measure.

We will also employ the internal substitution constructors $\FmtNF{\SbEmp} :
\mathbf{1}\to\NF{[\,]}$ and $\FmtNE{\SbEmp} : \mathbf{1}\to\NE{[\,]}$ defined as
follows:
\begin{align*}
  \FmtNF{\SbEmp[\Gamma]}(\star) &= \FmtNF{\SbProj^{\vert\Gamma\vert}}
  \\
  \FmtNE{\SbEmp[\Gamma]}(\star) &= \FmtNE{\SbProj^{\vert\Gamma\vert}}
\end{align*}

\subsection{The gluing construction}

Next, we will construct the \emph{gluing category} which will serve as our
principal semantic domain for the model construction. Consider the comma
category $\GlCat{\Sigma} \equiv \Psh{\Ren{\Sigma}}\downarrow\TM$, which
``glues'' syntactic contexts together with their semantics in
presheaves.\footnote{Careful readers will note that this is a notation for the
actual instance of the comma construction,
$\mathbf{id}_{\Psh{\Ren{\Sigma}}}\downarrow\TM$.} Concretely, an object of
$\GlCat{\Sigma}$ is a tuple $(\mathcal{D}:\Psh{\Ren{\Sigma}},
\Delta:\ClCat{\Sigma}, \Quo{\Delta} : \mathcal{D}\to\TM(\Delta))$;
a morphism
$(\mathcal{G},\Gamma,\Quo{\Gamma})\to(\mathcal{D},\Delta,\Quo{\Delta})$ is a
commuting square of the following form, which we will suggestively write
$\Realize{\delta}{d}$:
\[
  \begin{tikzcd}[sep=huge,cramped]
    \mathcal{G}
    \arrow[r,"\FmtVal{d}"]
    \arrow[d,swap,"\Quo{\Gamma}"]
    &
    \mathcal{D}
    \arrow[d,"\Quo{\Delta}"]
    \\
    \TM(\Gamma)
    \arrow[r, swap, "\FmtTm{\delta}"]
    &
    \TM(\Delta)
  \end{tikzcd}
\]

The gluing category $\GlCat{\Sigma}$ is the category of proof-relevant logical
predicates, and is known to be cartesian closed, and thence a model of simply
typed lambda calculus. To use this information to our advantage, we will need
to ``unearth'' its cartesian closed structure in explicit terms.

\paragraph{Presentation as a pullback}

Following~\citet{frey:2013}, we can give a more intuitive presentation of the
gluing construction as a pullback of the fundamental fibration along $\TM$:
\[
  \begin{tikzcd}[sep=huge,cramped]
    {\GlCat{\Sigma}}
    \arrow[r, dashed]
    \arrow[d, -{Triangle[open]}, dashed,swap,"\GlFib{\Sigma}"]
    \arrow[dr, phantom, pos = 0, "\lrcorner"]
    &
    \ARR{\Psh{\Ren{\Sigma}}}
    \arrow[d,-{Triangle[open]}, "\Cod"]
    \\
    \ClCat{\Sigma}
    \arrow[r,swap,"\TM"]
    &
    \Psh{\Ren{\Sigma}}
  \end{tikzcd}
\]

From the pullback above, we have the \emph{gluing fibration}
$\GlFib{\Sigma}:\GlCat{\Sigma}\to\ClCat{\sigma}$ which acts on objects
$(\mathcal{D},\Delta,\Quo{\Delta})$ by projecting $\Delta$, and on morphisms
$\Realize{\delta}{d}:\Hom[\GlCat{\Sigma}]{(\mathcal{G},\Gamma,q_\Gamma)}{(\mathcal{D},\Delta,\Quo{\Delta})}$ by projecting
$\delta$.\footnote{\citet{streicher:1998} calls this the ``codomain functor'',
but to avoid confusion with the codomain functor that it is a pullback of, we
use a different terminology.}

\subsection{Reification, reflection and logical predicates}

Observe that there are obvious natural embeddings
$\RbNF{\tau}:\NF{\tau}\hookrightarrow\TM(\tau)$ and
$\RbNE{\tau}:\NE{\tau}\hookrightarrow\TM(\tau)$ for each
$\tau\in\Ty{\Sorts[\Sigma]}$, called ``readback''.

In order to give an explicit character to the cartesian closed structure of
$\GlCat{\Sigma}$, we will define a proof-relevant family of logical predicates
$\Pred{\tau}:\Psh{\Ren{\Sigma}}$ by induction on $\tau\in\Ty{\Sorts[\Sigma]}$,
simultaneously exhibiting natural transformations
$\Reflect{\tau}:\NE{\tau}\to\Pred{\tau}$ (pronounced ``reflect'') and
$\Reify{\tau}:\Pred{\tau}\to\NF{\tau}$ (pronounced ``reify'') such that the
following triangle commutes:
\[
  \begin{tikzcd}[sep=huge,cramped]
    \NE{\tau}
    \arrow[rr,"\Reify{\tau}\circ\Reflect{\tau}"]
    \arrow[dr,hook,swap,"\RbNE{\tau}"]
    &&
    \NF{\tau}
    \arrow[dl,hook,"\RbNF{\tau}"]
    \\
    &\TM(\tau)
  \end{tikzcd}
  \tag{reify-reflect yoga}
\]

\begin{remark}

  An alternative to this approach is to
  follow~\citet{altenkirch-hofmann-streicher:1995} and employ an ad hoc
  ``twisted gluing'' category, in which the data of the gluing objects contains
  the reification and reflection maps. This has the benefit of leading to a
  proof which is more self-contained, but the disadvantage is that it is not
  clear how to connect this twisted gluing category to the classical
  construction.

\end{remark}

\paragraph{Atomic types}
For an atomic type $\sigma\in{}\Sorts[\Sigma]$, we define $\Pred{\sigma}=\NF{\sigma}$,
$\Reflect{\tau}=1$, $\Reify{\tau}=1$; it is easy to see that the
reify-reflect yoga is upheld. Next, we come to the compound types.

\paragraph{Product types}
Fixing types $\sigma,\tau\in\Ty{\Sorts[\Sigma]}$, we define the logical predicate and the reflection and reification maps, using the internal language of $\Psh{\Ren{\Sigma}}$:
\begin{align*}
  \Pred{\sigma\times\tau} &= \Pred{\sigma}\times\Pred{\tau}
  \\
  \Reflect{\sigma\times\tau}[t]
  &=
  \FmtVal{
    \Parens{
      \Reflect{\sigma}[t.1], \Reflect{\tau}[t.2]
    }
  }
  \\
  \Reify{\sigma\times\tau}[\FmtVal{v_0},\FmtVal{v_1}]
  &=
  \FmtNF{
    \Parens{
      \Reify{\sigma}[v_0],
      \Reify{\tau}[v_1]
    }
  }
\end{align*}

To execute the reify-reflect yoga, working internally, we fix
$t:\NE{\sigma\times\tau}$; we need to observe that
$\RbNF{\sigma\times\tau}[\Reflect{\sigma\times\tau}[\Reify{\sigma\times\tau}[t]]]
= \RbNE{\sigma\times\tau}[t]$.
\begin{align*}
  \RbNF{\sigma\times\tau}[\Reify{\sigma\times\tau}[\Reflect{\sigma\times\tau}[t]]]
  &=
  \RbNF{\sigma\times\tau}[
    \Reify{\sigma\times\tau}[
      \Reflect{\sigma}[t.1], \Reflect{\tau}[t.2]
    ]
  ]
  \\
  &=
  \RbNF{\sigma\times\tau}[
    \Reify{\sigma}[\Reflect{\sigma}[t.1]],
    \Reify{\tau}[\Reflect{\tau}[t.2]]
  ]
  \\
  &=
  \FmtTm{
    \Parens{
      \RbNF{\sigma}[\Reify{\sigma}[\Reflect{\sigma}[t.1]]],
      \RbNF{\tau}[\Reify{\tau}[\Reflect{\tau}[t.1]]]
    }
  }
  \\
  &=
  \FmtTm{
    \Parens{
      \RbNE{\sigma}[t.1],
      \RbNE{\tau}[t.2]
    }
  }
  \tag{i.h., i.h.}
  \\
  &=
  \FmtTm{
    \Parens{
      (\RbNE{\sigma\times\tau}[t]).1,
      (\RbNE{\sigma\times\tau}[t]).2
    }
  }
  \\
  &=
  \RbNE{\sigma\times\tau}[t]
  \tag{\textsc{pair/eta}}
\end{align*}

Above, the steps that commute readback of (neutrals, normals) with the syntax
of the $\lambda$-theory follow from the fact that normals and neutrals actually
embed directly into the syntax unchanged.

\paragraph{Function types}

To interpret function types, we cannot simply use the exponential in
$\Psh{\Ren{\Sigma}}$, as this would take us outside the realm of
\emph{definable} functions. In a move apparently inspired by Kreisel's
\emph{modified realizability}, we include in the logical predicate both a
definable function and its meaning, taking the pullback
\[
  \begin{tikzcd}[cramped]
    {\Pred{\sigma\to\tau}}
    \arrow[r,dashed]
    \arrow[d,dashed]
    \arrow[dr, phantom, pos = 0, "\lrcorner"]
    &
    {\Pred{\tau}}^{\Pred{\sigma}}
    \arrow[d,"\FmtVal\phi"]
    \\
    \TM(\sigma\to\tau)
    \arrow[r,swap,"\FmtVal\psi"]
    & \Parens{\TM(\tau)}^{\Pred{\sigma}}
  \end{tikzcd}
\]
where for clarity, we define arrows $\FmtVal{\phi},\FmtVal{\psi}$ in the internal language of $\Psh{\Ren{\Sigma}}$ as follows:
\begin{align*}
  \FmtVal{F}:{\Pred{\tau}}^{\Pred{\sigma}}\vdash\FmtVal{\phi} & \equiv
  \MetaLam{\FmtVal{v}}{
    \RbNF{\tau}[\Reify{\tau}[F(v)]]
  }
  \\
  \FmtTm{t}:\TM(\sigma\to\tau)\vdash\FmtVal{\psi}&\equiv
  \MetaLam{\FmtVal{v}}{
    \FmtTm{
      t\Parens{
        \RbNF{\sigma}[\Reify{\sigma}[v]]
      }
    }
  }
\end{align*}

Abusing notation slightly, we will write an element of $\Pred{\sigma\to\tau}$
as $\Realize{t}{F}$ where $\FmtTm{t}:\TM(\sigma\to\tau)$ and
$\FmtVal{F}:{\Pred{\tau}}^{\Pred{\sigma}}$.
Next, we need to define reflection of neutrals and reification into normals:
\begin{align*}
  \Reflect{\sigma\to\tau}[t]
  &=
  \Realize{
    \RbNE{\sigma\to\tau}[t]
  }{
    \MetaLam{\FmtVal{v}}{
      \Reflect{\tau}[t(\Reify{\sigma}[v])]
    }
  }
  \\
  \Reify{\sigma\to\tau}[\Realize{t}{F}]
  &=
  \FmtNF{
    \ObjLam{\sigma}{
      \MetaLam{\FmtNF{x}}{
        \Reify{\tau}[
          F\Parens{
            \Reflect{\sigma}[\Var(x)]
          }
        ]
      }
    }
  }
\end{align*}

To prove the reify-reflect yoga, (working internally) fix
$t:\NE{\sigma\to\tau}$.
\begin{align*}
  \RbNF{\sigma\to\tau}[\Reify{\sigma\to\tau}[\Reflect{\sigma\to\tau}[t]]]
  &=
  \RbNF{\sigma\to\tau}[
    \Reify{\sigma\to\tau}[
      \Realize{
        \RbNE{\sigma\to\tau}[t]
      }{
        \MetaLam{v}{
          \Reflect{\tau}[
            t(\Reify{\sigma}[v])
          ]
        }
      }
    ]
  ]
  \\
  &=
  \RbNF{\sigma\to\tau}[
    \ObjLam{\sigma}{
      \MetaLam{x}{
        \Reify{\tau}[
          \Reflect{\tau}[
            t\Parens{
              \Reify{\sigma}[\Reflect{\sigma}[\Var(x)]]
            }
          ]
        ]
      }
    }
  ]
  \\
  &=
  \FmtTm{
    \ObjLam{\sigma}{
      \MetaLam{x}{
        \RbNF{\tau}[
          \Reify{\tau}[
            \Reflect{\tau}[
              t\Parens{
                \Reify{\sigma}[
                  \Reflect{\sigma}[\Var(x)]
                ]
              }
            ]
          ]
        ]
      }
    }
  }
  \\
  &=
  \FmtTm{
    \ObjLam{\sigma}{
      \MetaLam{x}{
        \RbNE{\tau}[
          t\Parens{
            \Reify{\sigma}[
              \Reflect{\sigma}[\Var(x)]
            ]
          }
        ]
      }
    }
  }
  \tag{i.h.}
  \\
  &=
  \FmtTm{
    \ObjLam{\sigma}{
      \MetaLam{x}{
        \Parens{\RbNE{\tau}[t]}\Parens{
          \RbNF{\sigma}[
            \Reify{\sigma}[
              \Reflect{\sigma}[\Var(x)]
            ]
          ]
        }
      }
    }
  }
  \\
  &=
  \FmtTm{
    \ObjLam{\sigma}{
      \MetaLam{x}{
        \Parens{\RbNE{\tau}[t]}\Parens{\RbNE{\sigma}[\Var(x)]}
      }
    }
  }
  \\
  &=
  \FmtTm{
    \ObjLam{\sigma}{
      \MetaLam{x}{
        \Parens{\RbNE{\tau}[t]}\Parens{\Var(x)}
      }
    }
  }
  \tag{i.h.}
  \\
  &=
  \RbNE{\tau}[t]
  \tag{\textsc{abs/eta}}
\end{align*}

\paragraph{Contexts}

The interpretation is now extended to contexts $\Gamma\in\Ty{\Sorts[\Sigma]}^\star$,
which are the ``types'' of substitutions; the interpretation is essentially the
same as the one for products.
\begin{align*}
  \Pred{[\,]} &= \mathbf{1}
  \\
  \Reflect{[\,]}[\SbEmp] &= \FmtVal{\star}
  \\
  \Reify{[\,]}[\star] &= \FmtNF{\SbEmp}
  \\[6pt]
  \Pred{\Gamma,\tau} &= \Pred{\Gamma}\times\Pred{\tau}
  \\
  \Reflect{\Gamma,\tau}[\SbExt{\gamma}{t}] &= (\Reflect{\Gamma}[\gamma], \Reflect{\tau}[t])
  \\
  \Reify{\Gamma,\tau}[g,v] &= \SbExt{\Reify{\Gamma}[g]}{\Reify{\tau}[v]}
\end{align*}

The reify-reflect yoga follows in exactly the same way as it did for products.

\bigskip
\bigskip

Observe that for any $\tau\in\Ty{\Sorts[\Sigma]}$, the triple
$\Sem{\tau}\equiv \Parens{\Pred{\tau},\tau,\Quo{\tau}\equiv\RbNF{\tau}\circ\Reify{\tau}}$ is an object in
$\GlCat{\Sigma}$.
This brings us to an explicit characterization of the cartesian closed
structure of $\GlCat{\Sigma}$.

\begin{theorem}
   For $\sigma,\tau\in\Ty{\Sorts[\Sigma]}$, $\Sem{\sigma\times{}\tau}$ is the
   cartesian product $\Sem{\sigma}\times\Sem{\tau}$ in $\GlCat{\Sigma}$.
\end{theorem}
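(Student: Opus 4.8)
The plan is to reduce the statement to the standard fact that a comma category of the form $\Psh{\Ren{\Sigma}}\downarrow\TM$ inherits binary products from its two factors, computed componentwise, \emph{provided} the functor $\TM$ preserves binary products. Concretely, the product of $(\mathcal{D}_1,\Delta_1,q_1)$ and $(\mathcal{D}_2,\Delta_2,q_2)$ is $(\mathcal{D}_1\times\mathcal{D}_2,\ \Delta_1\times\Delta_2,\ q_{\mathrm{can}})$, where $q_{\mathrm{can}}$ is the composite of $q_1\times q_2:\mathcal{D}_1\times\mathcal{D}_2\to\TM(\Delta_1)\times\TM(\Delta_2)$ with the canonical isomorphism $\TM(\Delta_1)\times\TM(\Delta_2)\cong\TM(\Delta_1\times\Delta_2)$. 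The whole theorem then amounts to recognizing that $\Sem{\sigma\times\tau}$ has exactly these three components.

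First I would verify the proviso that $\TM$ preserves products. Since $\TM=i^{*}\circ\Yo$, with $\Yo:\ClCat{\Sigma}\to\Psh{\ClCat{\Sigma}}$ the cartesian closed (hence product-preserving) Yoneda embedding and $i^{*}$ a precomposition functor between presheaf categories (a right adjoint, so it preserves all limits), the composite $\TM$ carries products to products. Combined with the fact that $\sigma\times\tau$ is the categorical product of $\sigma$ and $\tau$ in $\ClCat{\Sigma}$ (from the cartesian closed structure of $\ClCat{\Sigma}$), this yields a canonical isomorphism $\TM(\sigma\times\tau)\cong\TM(\sigma)\times\TM(\tau)$; I would note that this iso is realized on each fiber by post-composition with the syntactic projections $t\mapsto(t.1,t.2)$, with inverse given by syntactic pairing, and that it is an isomorphism precisely by the product laws \textsc{fst/beta}, \textsc{snd/beta}, and \textsc{pair/eta}.

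With this in hand, the product object supplied by the lemma is $(\Pred{\sigma}\times\Pred{\tau},\ \sigma\times\tau,\ q_{\mathrm{can}})$, whose projections are $(\pi_1,\mathsf{fst})$ and $(\pi_2,\mathsf{snd})$, pairing the syntactic projections of $\ClCat{\Sigma}$ with the presheaf projections of $\Pred{\sigma}\times\Pred{\tau}$. Its first two components coincide with those of $\Sem{\sigma\times\tau}$ by the definition $\Pred{\sigma\times\tau}=\Pred{\sigma}\times\Pred{\tau}$, so the only thing left to check is the identity of structure maps $\Quo{\sigma\times\tau}=q_{\mathrm{can}}$. This is a short, purely definitional unfolding: using $\Reify{\sigma\times\tau}[v_0,v_1]=(\Reify{\sigma}[v_0],\Reify{\tau}[v_1])$ and the fact that readback sends a normal pair to the syntactic pair of its readbacks, $\Quo{\sigma\times\tau}[v_0,v_1]=\RbNF{\sigma\times\tau}[(\Reify{\sigma}[v_0],\Reify{\tau}[v_1])]$ rewrites to the syntactic pair $(\RbNF{\sigma}[\Reify{\sigma}[v_0]],\RbNF{\tau}[\Reify{\tau}[v_1]])$, which is exactly $q_1\times q_2$ transported along the pairing iso, i.e.\ $q_{\mathrm{can}}$.

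I do not expect a genuine obstacle here: once $\TM$ is seen to preserve products, the universal property is formal, flowing entirely from the componentwise description of products in the comma category and the uniqueness of pairings in $\ClCat{\Sigma}$ and in $\Psh{\Ren{\Sigma}}$. The only point demanding care is bookkeeping the orientation of the isomorphism $\TM(\sigma\times\tau)\cong\TM(\sigma)\times\TM(\tau)$ and matching it against the definition of $\Reify{\sigma\times\tau}$, so that the structure-map identity $\Quo{\sigma\times\tau}=q_{\mathrm{can}}$ lands on the nose; this calculation runs parallel to the product case of the reify-reflect yoga already carried out above, though here it does not even require the $\beta\eta$ laws.
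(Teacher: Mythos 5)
Your proposal is correct, but it takes a more abstract route than the paper. The paper never invokes the general lemma that binary products in the comma category $\mathbf{id}_{\Psh{\Ren{\Sigma}}}\downarrow\TM$ are computed componentwise when $\TM$ preserves finite products; instead it verifies the universal property by hand: it exhibits the two projection squares (presheaf projections upstairs, $\FmtTm{t}\mapsto\FmtTm{t.1}$ and $\FmtTm{t}\mapsto\FmtTm{t.2}$ downstairs), checks they commute by unfolding $\Quo{\sigma\times\tau}=\RbNF{\sigma\times\tau}\circ\Reify{\sigma\times\tau}$ and applying \textsc{fst/beta}, builds the mediating square by pairing in both components, and obtains uniqueness componentwise from the product structure of $\Pred{\sigma}\times\Pred{\tau}$ together with preservation of finite products by $\TM$ (justified, in a footnote, by exactly your adjointness argument: $\Yo$ preserves limits and $i^{*}$ is a right adjoint of left Kan extension). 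So your two ingredients coincide with the paper's; the difference is the packaging: you front-load the standard comma-category fact and reduce the theorem to the identification $\Quo{\sigma\times\tau}=q_{\mathrm{can}}$, whose verification is essentially the same calculation as the paper's commuting-square checks. Your version is more modular --- the comma-category lemma handles all finite limits uniformly, so the case of contexts comes for free --- whereas the paper's explicit version produces the concrete projections, pairings and comma conditions that are used downstream (in the construction of $\NfFun{\Gamma}{\Delta}$ and in the proof of Theorem~\ref{thm:normalization}), which is why the paper speaks of ``unearthing'' the cartesian closed structure rather than citing it. Two small points of care in your write-up: the comma-category lemma is stated relative to a choice of product in $\ClCat{\Sigma}$, so you must first observe (as you do) that $\sigma\times\tau$ with the syntactic projections is such a product, and this is precisely where \textsc{fst/beta}, \textsc{snd/beta} and \textsc{pair/eta} enter; consequently your closing claim that the $\beta\eta$ laws are not needed holds only for the structure-map identification $\Quo{\sigma\times\tau}=q_{\mathrm{can}}$, not for the proof as a whole.
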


\begin{proof}
  We will establish that $\Sem{\sigma\times\tau}$ is the cartesian product
  $\Sem{\sigma}\times\Sem{\tau}$ by exhibiting its universal property. We
  need to exhibit a span in $\GlCat{\Sigma}$ with the following property for any
  $D,d_1,d_2$:
  \[
    \begin{tikzcd}[sep=large,cramped]
      D
      \arrow[dr,dashed,"{\exists!\bar{d}}"]
      \arrow[ddr,swap,bend right=20,"d_1"]
      \arrow[drr,bend left=20,"d_2"]
      \\
      &
      \Sem{\sigma\times\tau}
      \arrow[d,"\pi_1"]
      \arrow[r,"\pi_2"]
      &
      \Sem{\tau}
      \\
      &
      \Sem{\sigma}
    \end{tikzcd}
  \]

  The projections $\pi_1,\pi_2$ are the following commuting squares:
  \begin{mathpar}
    \begin{tikzcd}[sep=large]
      \Pred{\sigma}\times\Pred{\tau}
      \arrow[r,"{\FmtVal{(p,q)}\vdash{}\FmtVal{p}}"]
      \arrow[d,swap,"\Quo{\sigma\times\tau}"]
      &
      \Pred{\sigma}
      \arrow[d,swap,"\Quo{\sigma}"]
      \\
      \TM(\sigma\times\tau)
      \arrow[r,swap,"{\FmtTm{t}\vdash\FmtTm{t.1}}"]
      &
      \TM(\sigma)
    \end{tikzcd}
    \and
    \begin{tikzcd}[sep=large]
      \Pred{\sigma}\times\Pred{\tau}
      \arrow[r,"{\FmtVal{(p,q)}\vdash\FmtVal{q}}"]
      \arrow[d,swap,"\Quo{\sigma\times\tau}"]
      &
      \Pred{\tau}
      \arrow[d,swap,"\Quo{\tau}"]
      \\
      \TM(\sigma\times\tau)
      \arrow[r,swap,"{\FmtTm{t}\vdash{}\FmtTm{t.2}}"]
      &
      \TM(\tau)
    \end{tikzcd}
  \end{mathpar}

  We show that the first square commutes (the second is identical); fixing
  $\FmtVal{p}:\Pred{\sigma},\FmtVal{q}:\Pred{\tau}$, we calculate.
  \begin{align*}
    \FmtTm{
      \Parens{
        \RbNF{\sigma\times\tau}[\Reify{\sigma\times\tau}[p,q]]
      }.1
    }
    &=
    \FmtTm{
      \Parens{
        \RbNF{\sigma\times\tau}[
          \Reify{\sigma}[p],
          \Reify{\tau}[q]
        ]
      }.1
    }
    \tag{def.}
    \\
    &=
    \FmtTm{
      \Parens{
        \RbNF{\sigma}[\Reify{\sigma}[p]],
        \RbNF{\tau}[\Reify{\tau}[q]]
      }.1
    }
    \tag{def.}
    \\
    &=
    \FmtTm{
      \Parens{
        \RbNF{\sigma}[\Reify{\sigma}[p]]
      }
    }
    \tag{\textsc{fst/beta}}
  \end{align*}

  Next, we need to show that there is a unique mediating arrow
  $\bar{d}:D\to\Sem{\sigma\times\tau}$ such that the two triangles commute.
  Unfolding what we are given, we have $D\equiv(\mathcal{D},\Delta,\Quo{\Delta})$
  and two commuting squares:
  \begin{mathpar}
    \begin{tikzcd}[sep=large,cramped]
      \mathcal{D}
      \arrow[r,"\FmtVal{d_{00}}"]
      \arrow[d,swap,"\Quo{\Delta}"]
      &
      \Pred{\sigma}
      \arrow[d,swap,"\Quo{\sigma}"]
      \\
      \TM(\Delta)
      \arrow[r,swap,"\FmtTm{d_{01}}"]
      &
      \TM(\sigma)
    \end{tikzcd}
    \and
    \begin{tikzcd}[sep=large,cramped]
      \mathcal{D}
      \arrow[r,"\FmtVal{d_{10}}"]
      \arrow[d,swap,"\Quo{\Delta}"]
      &
      \Pred{\tau}
      \arrow[d,swap,"\Quo{\tau}"]
      \\
      \TM(\Delta)
      \arrow[r,swap,"\FmtTm{d_{11}}"]
      &
      \TM(\tau)
    \end{tikzcd}
  \end{mathpar}

  We define the mediating map $\bar{d}$ as the following square:
  \[
    \begin{tikzcd}[sep=large,cramped]
      \mathcal{D}
      \arrow[r,"\FmtVal{(d_{00},d_{10})}"]
      \arrow[d,swap,"\Quo{\Delta}"]
      &
      \Pred{\sigma\times\tau}
      \arrow[d,swap,"\Quo{\sigma\times\tau}"]
      \\
      \TM(\Delta)
      \arrow[r,swap,"\FmtTm{(d_{01},d_{11})}"]
      &
      \TM(\sigma\times\tau)
    \end{tikzcd}
  \]

  To see that the square commutes, fix $\FmtVal{p}:\mathcal{D}$ and calculate.
  \begin{align*}
    \RbNF{\sigma\times\tau}[
      \Reify{\sigma\times\tau}[
        d_{00}(p),d_{10}(p)
      ]
    ]
    &=
    \RbNF{\sigma\times\tau}[
      \Reify{\sigma}[d_{00}(p)],
      \Reify{\tau}[d_{10}(p)]
    ]
    \\
    &=
    \FmtTm{
      \Parens{
        \RbNF{\sigma}[
          \Reify{\sigma}[d_{00}(p)]
        ],
        \RbNF{\sigma}[
          \Reify{\sigma}[d_{00}(p)]
        ]
      }
    }
    \\
    &=
    \FmtTm{
      \Parens{
        d_{01}(\Quo{\Delta}[p]),
        d_{11}(\Quo{\Delta}[p])
      }
    }
  \end{align*}

  It is easy to see that $\pi_1\circ\bar{d}=d_1$ and $\pi_2\circ\bar{d}=d_2$.
  The uniqueness of $\bar{d}$ with this property follows from the fact that its
  components are unique: $\FmtVal{(d_{00},d_{10})}$ is the unique mediating arrow given
  by the universal property of the product
  $\Pred{\sigma\times\tau}\equiv\Pred{\sigma}\times\Pred{\tau}$; moreover,
  because $\TM$ preserves finite products, we can say the same of
  $\FmtTm{(d_{01},d_{11})}$.\footnote{Recall that $\TM$ is defined as $i^*\circ\Yo$
  with $i^*$ the reindexing functor induced by
  $i:\Ren{\Sigma}\to\ClCat{\Sigma}$. Because $i^*$ has a left adjoint, given by
  Kan extension, it preserves limits; because the Yoneda embedding also
  preserves limits, $\TM$ preserves limits too.}

\end{proof}

\begin{exercise}
   For $\sigma,\tau\in\Ty{\Sorts[\Sigma]}$, show that $\Sem{\sigma\to\tau}$ is the
   exponential in $\Sem{\tau}^{\Sem{\sigma}}$ in
   $\GlCat{\Sigma}$~\citep[Example~2.1.12]{johnstone:2002}.
\end{exercise}

\begin{corollary}\label{cor:gluing-model}
  $\GlCat{\Sigma}$ is a model of the free $\lambda$-theory generated by
  $\Sigma$, with interpretation functor $\Sem{-}:\ClCat{\Sigma}\to\GlCat{\Sigma}$.
\end{corollary}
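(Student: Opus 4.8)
The plan is to realize $\Sem{-}$ as a cartesian closed functor out of $\ClCat{\Sigma}$ and to appeal to the universal property of $\ClCat{\Sigma}$ as the \emph{free} $\lambda$-theory on $\Sigma$: into any cartesian closed category, a cartesian closed functor is determined uniquely by an interpretation of the generators, i.e.\ an object for each atomic type and a morphism for each operation, and such a functor automatically respects the $\beta\eta$-quotient because those equations hold in any genuine cartesian closed target. The target $\GlCat{\Sigma}$ is cartesian closed by the standard theory of gluing along the finite-product-preserving functor $\TM$ (Frey's pullback presentation). The preceding Theorem and Exercise, together with the terminal case $\Sem{[\,]}=\mathbf{1}$ and the context clause $\Pred{\Gamma,\tau}=\Pred{\Gamma}\times\Pred{\tau}$, show that the inductively defined assignment $\tau\mapsto\Sem{\tau}$ preserves products and exponentials \emph{on the nose}; hence it coincides with the canonical extension, along the cartesian closed structure of $\GlCat{\Sigma}$, of whatever interpretation we choose for the atomic types.

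First I would fix the interpretation of the generators. An atomic type $\sigma\in\Sorts[\Sigma]$ is sent to $\Sem{\sigma}=\Parens{\NF{\sigma},\sigma,\RbNF{\sigma}}$, which is an object of $\GlCat{\Sigma}$ because $\Pred{\sigma}=\NF{\sigma}$ and $\Reify{\sigma}=1$. An operation $\vartheta$ with $\Ops[\Sigma]\Parens{\vartheta}\equiv\Parens{\Gamma,\tau}$, regarded as a morphism $\Gamma\to\tau$ of $\ClCat{\Sigma}$, is sent to the morphism $\Sem{\Gamma}\to\Sem{\tau}$ of $\GlCat{\Sigma}$ whose lower edge is $\TM\Parens{\vartheta}$ (postcomposition by $\vartheta$, i.e.\ $\FmtTm{s}\mapsto\FmtTm{\vartheta[s]}$) and whose upper edge, in the internal language of $\Psh{\Ren{\Sigma}}$, is
\[
  \FmtVal{g}:\Pred{\Gamma}\ \vdash\ \Reflect{\tau}[\vartheta[\Reify{\Gamma}[g]]]:\Pred{\tau};
\]
here $\Reify{\Gamma}[g]:\NF{\Gamma}$ is a normal substitution, so $\vartheta[\Reify{\Gamma}[g]]$ is a neutral by the \textsc{operation} rule, which we then reflect.

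The crux is to check that this upper edge and $\TM\Parens{\vartheta}$ fit into a commuting square, i.e.\ that for $g:\Pred{\Gamma}$ we have $\Quo{\tau}\Parens{\Reflect{\tau}[\vartheta[\Reify{\Gamma}[g]]]}=\TM\Parens{\vartheta}\Parens{\Quo{\Gamma}[g]}$. Expanding the left-hand side through $\Quo{\tau}=\RbNF{\tau}\circ\Reify{\tau}$ and applying the reify-reflect yoga $\RbNF{\tau}\circ\Reify{\tau}\circ\Reflect{\tau}=\RbNE{\tau}$ collapses it to $\RbNE{\tau}[\vartheta[\Reify{\Gamma}[g]]]$; since readback embeds neutrals into the syntax unchanged and commutes with the operation former and with substitution, this is $\FmtTm{\vartheta[\RbNF{\Gamma}[\Reify{\Gamma}[g]]]}=\FmtTm{\vartheta[\Quo{\Gamma}[g]]}$, which is precisely the right-hand side. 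This is the one genuinely load-bearing calculation: everything turns on the reify-reflect yoga, exactly as in the product and function cases.

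With the generators interpreted and the operation squares verified, the universal property of $\ClCat{\Sigma}$ yields a unique cartesian closed functor $\Sem{-}:\ClCat{\Sigma}\to\GlCat{\Sigma}$ whose object action is the inductive $\tau\mapsto\Sem{\tau}$ by the remark of the first paragraph; this is the desired model. Finally I would record that $\GlFib{\Sigma}\circ\Sem{-}=\mathrm{id}_{\ClCat{\Sigma}}$, since $\GlFib{\Sigma}$ projects the middle component of each $\Sem{\tau}$ and of each interpreted morphism, so that $\Sem{-}$ is in fact a cartesian closed \emph{section} of the gluing fibration, which is the structural fact underlying the normalization argument to follow.
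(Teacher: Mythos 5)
Your proposal is correct, and its core route is the one the corollary is intended to be read as taking: the preceding Theorem and Exercise (together with the terminal object $\Sem{[\,]}$) exhibit the inductively defined $\Sem{\tau}$ as chosen cartesian closed structure on $\GlCat{\Sigma}$, and the freeness of $\ClCat{\Sigma}$ then extends an interpretation of the generators to the functor $\Sem{-}$. The paper states the corollary with no proof at all, so the comparison is really between your argument and that implicit sketch; and here you supply something the paper never writes down but which is genuinely required for ``model of the free $\lambda$-theory generated by $\Sigma$'' rather than merely ``cartesian closed category'': the interpretation of each operation $\vartheta$ with $\Ops[\Sigma](\vartheta)\equiv(\Gamma,\tau)$ as a morphism $\Sem{\Gamma}\to\Sem{\tau}$ of $\GlCat{\Sigma}$. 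Your choice --- lower edge $\TM(\vartheta)$, upper edge ``reify the argument tuple, apply $\vartheta$ to form a neutral, reflect'', with commutation of the square following from the reify-reflect yoga and the fact that readback is an inclusion commuting with the term formers --- is the right one, and the calculation is sound; it is the same pattern the paper itself exploits at variables (reflection of the identity neutral substitution) in the proof of Theorem~\ref{thm:normalization}. In this respect your proof is more complete than the paper's.

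Two caveats, both minor. First, your claim that the inductive assignment $\tau\mapsto\Sem{\tau}$ ``coincides with the canonical extension'' tacitly reads the universal property of $\ClCat{\Sigma}$ strictly (chosen structure, strict cartesian closed functors); the paper is no more precise on this point, but it is worth being aware of. Second, your closing assertion that $\GlFib{\Sigma}\circ\Sem{-}=\mathbf{id}_{\ClCat{\Sigma}}$ holds ``since $\GlFib{\Sigma}$ projects the middle component of each interpreted morphism'' is circular as stated: for a morphism of $\ClCat{\Sigma}$ that is not a generator, $\Sem{-}$ is known only through the universal property, so the fact that the lower component of its interpretation is the original morphism is not a definition but a theorem --- namely the paper's Theorem~\ref{thm:eval-cod}, which is proved by initiality ($\GlFib{\Sigma}\circ\Sem{-}$ is a $\Sigma$-homomorphism $\ClCat{\Sigma}\to\ClCat{\Sigma}$, hence the identity). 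Your generator-level observation, together with the fact that $\GlFib{\Sigma}$ preserves the cartesian closed structure, is an input to that initiality argument, not a substitute for it.
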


\begin{theorem}\label{thm:eval-cod}
  The composite functor $\GlFib{\Sigma}\circ\Sem{-}$ is the identity endofunctor on $\ClCat{\Sigma}$:
  \[
    \begin{tikzcd}[sep=huge]
      \ClCat{\Sigma}
      \arrow[r,"\Sem{-}"]
      \arrow[swap,dr,"\mathbf{id}_{\ClCat{\Sigma}}"]
      &
      \GlCat{\Sigma}
      \arrow[d,"\GlFib{\Sigma}"]
      \\
      &
      \ClCat{\Sigma}
    \end{tikzcd}
  \]
\end{theorem}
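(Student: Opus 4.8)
The plan is to prove the two halves of functoriality separately, since on objects the statement is immediate and the only real content lies in the morphisms. For objects, unfolding the definitions gives $\GlFib{\Sigma}\Parens{\Sem{\tau}} = \GlFib{\Sigma}\Parens{\Pred{\tau},\tau,\Quo{\tau}} = \tau$, because the gluing fibration is by construction the projection onto the $\ClCat{\Sigma}$-component of the pullback. This already holds for atomic types and, crucially, for the compound types as well: the explicit constructions in the preceding theorem and in the exercise were arranged so that the $\ClCat{\Sigma}$-component of $\Sem{\sigma\times\tau}$, of $\Sem{\sigma\to\tau}$, and of the context interpretations is literally $\sigma\times\tau$, $\sigma\to\tau$, and the corresponding context. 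In other words, $\GlFib{\Sigma}$ \emph{strictly} preserves the chosen cartesian closed structure, sending the glued products and exponentials to the honest products and exponentials of $\ClCat{\Sigma}$ on the nose.

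For morphisms I would argue through the universal property of $\ClCat{\Sigma}$ rather than by a hand induction over term formers. First I would record that $\GlFib{\Sigma}$ is a strict cartesian closed functor: inspecting the product theorem and the exponential exercise, the syntactic components of the projections, the pairing, the evaluation, and the currying maps are exactly $\FmtTm{t}\vdash\FmtTm{t.1}$, $\FmtTm{t}\vdash\FmtTm{t.2}$, and the analogous $\lambda$-theoretic operations, so $\GlFib{\Sigma}$ carries the cartesian closed structure of $\GlCat{\Sigma}$ to that of $\ClCat{\Sigma}$ without any mediating isomorphism. It follows that $\GlFib{\Sigma}\circ\Sem{-}$ is a cartesian closed endofunctor of $\ClCat{\Sigma}$. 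Since $\Sem{-}$ is the interpretation functor of Corollary~\ref{cor:gluing-model}, it sends each atomic type $\sigma$ to $\Sem{\sigma}$ and each operation $\vartheta$ with $\Ops[\Sigma](\vartheta)\equiv(\Gamma,\tau)$ to a glued morphism whose syntactic component is $\vartheta$ itself; composing with $\GlFib{\Sigma}$ therefore reproduces $\sigma$ and $\vartheta$. Thus $\GlFib{\Sigma}\circ\Sem{-}$ and $\mathbf{id}_{\ClCat{\Sigma}}$ are two cartesian closed functors that agree on all generators, and by the freeness of $\ClCat{\Sigma}$ as the classifying category of $\Sigma$ they must coincide.

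The main obstacle is not combinatorial but concerns \emph{strictness}: the universal property of a free cartesian closed category only determines a structure-preserving functor up to natural isomorphism, whereas the theorem asserts an on-the-nose identity. The substance of the proof is therefore the verification that every ingredient of $\Sem{-}$ — the product $\Pred{\sigma}\times\Pred{\tau}$, the modified-realizability pullback interpreting $\sigma\to\tau$, the context interpretation, and the interpretation of each operation — projects under $\GlFib{\Sigma}$ to the corresponding datum of $\ClCat{\Sigma}$ strictly. This is precisely what the explicit constructions secure, since in each case the syntactic leg of the commuting square is the genuine $\lambda$-theoretic term and $\Quo{\tau}\equiv\RbNF{\tau}\circ\Reify{\tau}$ only modifies the semantic leg; consequently $\Sem{-}$ is a set-theoretic section of $\GlFib{\Sigma}$ on objects and morphisms alike. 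One could equally replace the appeal to freeness by a direct induction on the term and substitution formers, in which the projection, pairing, abstraction, and application cases each reduce to the inductive hypothesis together with the strict preservation just noted; I would expect the operation case, where one must confirm that the model's interpretation of $\vartheta$ has syntactic shadow $\vartheta$, to be the only step requiring genuine attention.
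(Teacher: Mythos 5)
Your proof is correct and takes essentially the same route as the paper: the paper's entire proof is the observation that $\ClCat{\Sigma}$ is initial among categories with $\Sigma$-structure, so the composite $\GlFib{\Sigma}\circ\Sem{-}$, being a $\Sigma$-homomorphism, must coincide with the identity. Your added verification that $\GlFib{\Sigma}$ strictly preserves the chosen cartesian closed structure and the interpretation of generators --- so that the composite genuinely is a strict $\Sigma$-homomorphism rather than one up to isomorphism --- is precisely the detail the paper leaves implicit.
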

\begin{proof}

  This follows immediately from the fact that $\ClCat{\Sigma}$ is the
  classifying category of the theory $\Sigma$, so it is the initial category
  with the structure of $\Sigma$. Therefore, any $\Sigma$-homomorphism
  $\ClCat{\Sigma}\to\ClCat{\Sigma}$ must be the identity, including the composite above.
\end{proof}

Now, working externally in the category $\SET$, we can explicitly construct the
normalization function,
$\NfFun{\Gamma}{\Delta}:\Hom[\ClCat{\Sigma}]{\Gamma}{\Delta}\to\NF{\Delta}(\Gamma)$
as the following composite:
\[
  \begin{tikzcd}
    \Hom[\ClCat{\Sigma}]{\Gamma}{\Delta}
    \arrow[swap,r,"\Sem{-}"]
    \arrow[rrrr,bend left=20,"\NfFun{\Gamma}{\Delta}"]
    &
    \Hom[\GlCat{\Sigma}]{\Sem{\Gamma}}{\Sem{\Delta}}
    \arrow[swap,r, "\pi"]
    &
    \Hom[\Psh{\Ren{\Sigma}}]{\Pred{\Gamma}}{\Pred{\Delta}}
    \arrow[swap,r, "\phi"]
    &
    \Pred{\Delta}(\Gamma)
    \arrow[swap,r, "{\Reify[\Gamma]{\Delta}}"]
    &
    \NF{\Delta}(\Gamma)
  \end{tikzcd}
\]
where
\begin{align*}
  \phi(\FmtVal{v}) &= \FmtVal{v}_{\Mute\Gamma}(\Reflect[\Gamma]{\Gamma}[\SbIdn[\Gamma]{\Gamma}])
\end{align*}

\begin{theorem}[Completeness]
  If $\SeqEq{\Sigma}{\Gamma}{t_0}{t_1}{\tau}$, then
  $\SeqEq{\Sigma}{\Gamma}{\RbNF[\Gamma]{\tau}[\NfFun{\Gamma}{\tau}[t_0]]}{\RbNF[\Gamma]{\tau}[\NfFun{\Gamma}{\tau}[t_1]]}{\tau}$.
\end{theorem}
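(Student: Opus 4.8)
The plan is to observe that this completeness statement is essentially immediate, once one recognizes that the normalization function is \emph{genuinely a function} on the quotiented syntax. First I would recall that, because $\ClCat{\Sigma}$ is the classifying category obtained by quotienting the clone by definitional equivalence, the hypothesis $\SeqEq{\Sigma}{\Gamma}{t_0}{t_1}{\tau}$ asserts precisely that $t_0$ and $t_1$ denote one and the same morphism in $\ClCat{\Sigma}$, i.e.\ the same element of the hom-set $\Hom[\ClCat{\Sigma}]{\Gamma}{\tau}$. Thus the passage to $\ClCat{\Sigma}$ has already identified definitionally equal terms before $\NfFun{\Gamma}{\tau}$ is ever applied.

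Next I would point out that every constituent of $\NfFun{\Gamma}{\tau}$ is an honest morphism in $\SET$: the interpretation $\Sem{-}$ is a functor (Corollary~\ref{cor:gluing-model}), hence well-defined on equivalence classes of substitutions, while the projection $\pi$, the evaluation map $\phi$, and the reification $\Reify[\Gamma]{\tau}$ are likewise set-theoretic functions. Consequently $\NfFun{\Gamma}{\tau}$ is a bona fide function $\Hom[\ClCat{\Sigma}]{\Gamma}{\tau}\to\NF{\tau}(\Gamma)$, and functions respect equality: from the identification of $t_0$ and $t_1$ we obtain $\NfFun{\Gamma}{\tau}[t_0]=\NfFun{\Gamma}{\tau}[t_1]$ as elements of $\NF{\tau}(\Gamma)$, whose equality relation is moreover discrete.

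Finally I would apply the readback embedding $\RbNF[\Gamma]{\tau}:\NF{\tau}\hookrightarrow\TM(\tau)$, which is again a function, to conclude that $\RbNF[\Gamma]{\tau}[\NfFun{\Gamma}{\tau}[t_0]]$ and $\RbNF[\Gamma]{\tau}[\NfFun{\Gamma}{\tau}[t_1]]$ are equal terms, and hence in particular definitionally equal, which is the desired conclusion $\SeqEq{\Sigma}{\Gamma}{\RbNF[\Gamma]{\tau}[\NfFun{\Gamma}{\tau}[t_0]]}{\RbNF[\Gamma]{\tau}[\NfFun{\Gamma}{\tau}[t_1]]}{\tau}$.

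I do not expect any genuine obstacle here: the mathematical content has already been discharged upstream, in the explicit cartesian closed structure of $\GlCat{\Sigma}$ and above all in the functoriality of the interpretation $\Sem{-}$ (Corollary~\ref{cor:gluing-model}, Theorem~\ref{thm:eval-cod}). The only point deserving care is to make explicit that completeness is \emph{not} a separate inductive argument on derivations of definitional equivalence, but merely the observation that a well-defined function on $\Hom[\ClCat{\Sigma}]{\Gamma}{\tau}$ sends equal inputs to equal outputs; all the work of respecting the equational theory was absorbed into the construction of $\Sem{-}$.
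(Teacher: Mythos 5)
Your proposal is correct and takes exactly the same route as the paper: the paper's proof is the one-line observation that $\NfFun{\Gamma}{\tau}$ is a function defined on morphisms of $\ClCat{\Sigma}$, which are already quotiented by definitional equivalence, so equal inputs yield equal outputs. Your write-up simply unfolds this observation in more detail (naming each constituent map of the composite), but the argument is identical in substance.
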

\begin{proof}
  This is immediate from the fact that we have defined a function out of the
  morphisms of $\ClCat{\Sigma}$, which are already quotiented under
  definitional equivalence.
\end{proof}

\begin{theorem}[Normalization]\label{thm:normalization}
  If $\Seq{\Sigma}{\Gamma}{t}{\tau}$, then $\SeqEq{\Sigma}{\Gamma}{\RbNF[\Gamma]{\tau}[\NfFun{\Gamma}{\tau}[t]]}{t}{\tau}$.
\end{theorem}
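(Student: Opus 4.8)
The plan is to read the normalization identity off the commuting square that the interpretation functor $\Sem{-}$ assigns to $t$, using Theorem~\ref{thm:eval-cod} to pin down its lower edge. Given $\Seq{\Sigma}{\Gamma}{t}{\tau}$, I would first form the morphism $\Sem{t}:\Sem{\Gamma}\to\Sem{\tau}$ in $\GlCat{\Sigma}$, which exists because $\Sem{-}$ is the interpretation functor of Corollary~\ref{cor:gluing-model}. By the definition of the gluing category this morphism \emph{is} a commuting square whose vertical legs are the quotient maps $\Quo{\Gamma}=\RbNF{\Gamma}\circ\Reify{\Gamma}$ and $\Quo{\tau}=\RbNF{\tau}\circ\Reify{\tau}$, whose top leg is $\pi(\Sem{t}):\Pred{\Gamma}\to\Pred{\tau}$, and whose bottom leg is $\TM(\GlFib{\Sigma}(\Sem{t}))$. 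By Theorem~\ref{thm:eval-cod} we have $\GlFib{\Sigma}(\Sem{t})=t$, so the bottom leg is exactly $\TM(t)$, which is postcomposition by $t$. The commutativity $\Quo{\tau}\circ\pi(\Sem{t})=\TM(t)\circ\Quo{\Gamma}$ is the engine of the whole argument.

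Working in the fibre over $\Gamma\in\Ren{\Sigma}$ and writing $\pi(\Sem{t})_\Gamma$, $\TM(t)_\Gamma$ for the components of the respective natural transformations, the computation would run as follows. Starting from the definition of $\NfFun{\Gamma}{\tau}[{-}]$ and $\phi$ (and $\Quo{\tau}=\RbNF{\tau}\circ\Reify{\tau}$), then using the commutativity of the square above, then the reify-reflect yoga at the context $\Gamma$:
\begin{align*}
  \RbNF[\Gamma]{\tau}[\NfFun{\Gamma}{\tau}[t]]
  &=
  \Quo{\tau}[\pi(\Sem{t})_\Gamma(\Reflect[\Gamma]{\Gamma}[\SbIdn[\Gamma]{\Gamma}])]
  \tag{def.}
  \\
  &=
  \TM(t)_\Gamma\Parens{\Quo{\Gamma}[\Reflect[\Gamma]{\Gamma}[\SbIdn[\Gamma]{\Gamma}]]}
  \tag{square}
  \\
  &=
  \TM(t)_\Gamma\Parens{\RbNE[\Gamma]{\Gamma}[\SbIdn[\Gamma]{\Gamma}]}
  \tag{reify-reflect yoga}
  \\
  &=
  \TM(t)_\Gamma\Parens{\SbIdn{\Gamma}}
  \tag{lemma}
  \\
  &=
  t.
  \tag{\textsc{sb/cmp/idn/r}}
\end{align*}

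The only non-formal ingredient is the final lemma, $\RbNE[\Gamma]{\Gamma}[\SbIdn[\Gamma]{\Gamma}]=\SbIdn{\Gamma}$, asserting that reading back the \emph{admissible} neutral identity substitution recovers the genuine syntactic identity up to definitional equality. I would prove this by induction on $\Gamma$: the base case is immediate, since the empty projection $\SbProj^0$ already denotes $\SbIdn{[\,]}$; the step case unfolds the recursive definition $\FmtNE{\SbIdn{\Gamma,\sigma}}=\FmtNE{\SbExt{\SbIdn{\Gamma}}{\Var[\SbProj^0]}}$, uses the naturality of readback along the weakening renaming $\SbProj$ together with the induction hypothesis to reduce the readback to $\SbExt{\SbProj}{\Var}$, and finally collapses this to the identity. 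I expect this lemma to be the main obstacle, since it is the one place where the derived neutral identity must be related to the primitive one; note in particular that the equation $\SbExt{\SbProj}{\Var}=\SbIdn{\Gamma,\sigma}$ is not among the generating equations listed above, but holds in $\ClCat{\Sigma}$ because $\ClCat{\Sigma}$ is cartesian closed (by the earlier proposition), i.e.\ context extension is a genuine categorical product. With the lemma in hand, the last step uses that $\TM(t)_\Gamma$ is postcomposition by $t$ and the right unit law \textsc{sb/cmp/idn/r} to yield $t\circ\SbIdn{\Gamma}=t$; by contrast, all the remaining steps are purely formal manipulations of the gluing square and the reify-reflect yoga.
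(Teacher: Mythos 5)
Your proof is correct and takes essentially the same route as the paper's: interpret $t$ in $\GlCat{\Sigma}$, evaluate the commuting gluing square at $\Reflect[\Gamma]{\Gamma}[\SbIdn[\Gamma]{\Gamma}]$, apply the reify-reflect yoga, and use Theorem~\ref{thm:eval-cod} to identify the syntactic edge of $\Sem{t}$ with $t$ itself (you invoke that theorem at the outset, the paper at the very end---an inessential reordering). The auxiliary lemma $\RbNE[\Gamma]{\Gamma}[\SbIdn[\Gamma]{\Gamma}]=\SbIdn{\Gamma}$, which you isolate and prove by induction on $\Gamma$, is exactly the step the paper performs silently in the final line of its calculation, so spelling it out (including the observation that $\SbExt{\SbProj}{\Var}=\SbIdn{\Gamma,\sigma}$ is not a generating equation but holds by the product structure of $\ClCat{\Sigma}$) refines rather than departs from the paper's argument.
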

\begin{proof}
  Suppose $\FmtVal{\Sem{t}} = \FmtVal{\Parens{\Realize{t_0}{v}}}$. Now calculate.
  \begin{align*}
    \RbNF[\Gamma]{\tau}[\NfFun{\Gamma}{\tau}[t]]
    &=
    \RbNF[\Gamma]{\tau}[
      \Reify[\Gamma]{\tau}[
        v_{\Mute\Gamma}\Parens{
          \Reflect[\Gamma]{\Gamma}[\SbIdn[\Gamma]{\Gamma}]
        }
      ]
    ]
    \\
    &=
    \Parens{
      \RbNF{\tau}\circ\Reify{\tau}\circ{}\FmtVal{v}
    }_{\Mute\Gamma}(\Reflect[\Gamma]{\Gamma}[\SbIdn[\Gamma]{\Gamma}])
    \\
    &=
    \Parens{
      \FmtTm{t_0}\circ\RbNF{\Gamma}\circ\Reify{\Gamma}
    }_{\Mute\Gamma}(\Reflect[\Gamma]{\Gamma}[\SbIdn[\Gamma]{\Gamma}])
    \tag{comma condition}
    \\
    &=
    \FmtTm{
      t_0^{\Mute\Gamma}\Parens{
        \RbNF[\Gamma]{\Gamma}[
          \Reify[\Gamma]{\Gamma}(\Reflect[\Gamma]{\Gamma}(\SbIdn[\Gamma]{\Gamma}))
        ]
      }
    }
    \\
    &=
    \FmtTm{
      t_0^{\Mute\Gamma}\Parens{
        \RbNE[\Gamma]{\Gamma}[\SbIdn[\Gamma]{\Gamma}]
      }
    }
    \tag{reify-reflect yoga}
    \\
    &=
    \FmtTm{
      t_0^{\Mute\Gamma}\Parens{
        \SbIdn[\Gamma]{\Gamma}
      }
    }
  \end{align*}

  $\SeqEq{\Sigma}{\Gamma}{t_0^{\Mute\Gamma}(\SbIdn[\Gamma]{\Gamma})}{t}{\tau}$.  Writing
  $\FmtTm{\lfloor{}t\rfloor}$ for the induced natural transformation
  $\TM(\Gamma)\to\TM(\tau)$ such that
  $\SeqEq{\Sigma}{\Gamma}{\lfloor{}t\rfloor_{\Mute\Gamma}(\SbIdn[\Gamma]{\Gamma})}{t}{\tau}$, it
  suffices to show that $\FmtTm{t} = \FmtTm{t_0}$. Because $\GlFib{\Sigma}\Parens{\Sem{\FmtTm{t}}}=\FmtTm{t_0}$, by
  Theorem~\ref{thm:eval-cod} we have $\FmtTm{t}=\FmtTm{t_0}$.
\end{proof}

\begin{corollary}[Soundness]

  If
  $\SeqEq{\Sigma}{\Gamma}{\RbNF[\Gamma]{\tau}[\NfFun{\Gamma}{\tau}[t_0]]}{\RbNF[\Gamma]{\tau}[\NfFun{\Gamma}{\tau}[t_1]]}{\tau}$,
  then $\SeqEq{\Sigma}{\Gamma}{t_0}{t_1}{\tau}$.

\end{corollary}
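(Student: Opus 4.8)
The plan is to derive Soundness as an immediate consequence of Theorem~\ref{thm:normalization}, together with the fact that definitional equivalence on the clone is, by construction, an equivalence relation. First I would instantiate the Normalization theorem twice, once at $t_0$ and once at $t_1$, obtaining the two definitional equalities $\SeqEq{\Sigma}{\Gamma}{\RbNF[\Gamma]{\tau}[\NfFun{\Gamma}{\tau}[t_0]]}{t_0}{\tau}$ and $\SeqEq{\Sigma}{\Gamma}{\RbNF[\Gamma]{\tau}[\NfFun{\Gamma}{\tau}[t_1]]}{t_1}{\tau}$. These record that each term is recovered, up to the equational theory of $\Clone{\Sigma}{\Gamma,\tau}$, by reading back its computed normal form.

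Next I would chain these two equations with the hypothesis. Using the symmetry of the equivalence relation, the first equality rewrites $t_0$ to $\RbNF[\Gamma]{\tau}[\NfFun{\Gamma}{\tau}[t_0]]$; the hypothesis rewrites this to $\RbNF[\Gamma]{\tau}[\NfFun{\Gamma}{\tau}[t_1]]$; and the second equality rewrites the result to $t_1$. Transitivity of the equivalence relation on the clone then yields $\SeqEq{\Sigma}{\Gamma}{t_0}{t_1}{\tau}$, which is exactly the desired conclusion.

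There is no genuine obstacle here: all of the mathematical content already resides in Theorem~\ref{thm:normalization}, and the present argument merely repackages it by invoking reflexivity, symmetry, and transitivity of the quotienting relation on the clone. The only point worth flagging is that these manipulations take place entirely at the level of the quotiented syntax, where the normalization identity $\SeqEq{\Sigma}{\Gamma}{\RbNF[\Gamma]{\tau}[\NfFun{\Gamma}{\tau}[t]]}{t}{\tau}$ holds by Theorem~\ref{thm:normalization}; consequently no separate coherence or well-definedness argument is required, and Soundness is precisely the converse of Completeness made available by Normalization.
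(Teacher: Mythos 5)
Your proposal is correct and follows essentially the same route as the paper's own proof: both invoke Theorem~\ref{thm:normalization} at $t_0$ and $t_1$ and conclude by symmetry and transitivity of definitional equivalence, chaining through the hypothesis. The extra remarks about the quotiented syntax are fine but not needed beyond what the paper already records.
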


\begin{proof}
  To see that $\FmtTm{t_0}=\FmtTm{t_1}$, observe that by Theorem~\ref{thm:normalization} we have $\RbNF[\Gamma]{\tau}[\NfFun{\Gamma}{\tau}[t_i]] = \FmtTm{t_i}$, so by transitivity and assumption we have $\FmtTm{t_0}=\FmtTm{t_1}$.
\end{proof}

\section{Perspective}

\subsection{Global sections and the Freyd cover}
\label{sec:freyd-cover}

A more common use of the gluing technique lies in the construction of the Freyd
cover (also called the ``scone'', which is short for ``Sierpinski cone'') of a
topos in order to prove properties of \emph{closed proofs} in intuitionistic
higher-order logic, such as the disjunction and existence properties, which
correspond in $\lambda$-calculus to instances of the closed canonicity
result~\citep[p.\ 228]{lambek-scott:1986}.

As an example, we will prove both these properties for intuitionistic
higher-order logic over simple types and the natural numbers.\footnote{This
section is an expanded version of material which appears
in~\citet{shulman:blog:scones-logical-relations}, with some more details filled in.}
Writing $\mathcal{F}$ for the free topos generated by a natural numbers object
$\mathsf{N}$, observe that we have the global sections functor
$\Gamma\equiv\mathcal{F}(\mathbf{1},-)$ which takes every object to its global
elements.
We define the Freyd cover $\breve{\mathcal{F}}$ over $\mathcal{F}$ as the gluing
category obtained by pulling back the fundamental fibration along the global
sections functor:
\[
  \begin{tikzcd}[sep=huge,cramped]
    {\breve{\mathcal{F}}}
    \arrow[r,dashed]
    \arrow[d,-{Triangle[open]}, dashed,swap,"\pi_1"]
    \arrow[dr, phantom, pos = 0, "\lrcorner"]
    &
    {\SET}^\to
    \arrow[d,-{Triangle[open]}, "\Cod"]
    \\
    \mathcal{F}
    \arrow[r,swap,"\Gamma"]
    &
    \SET
  \end{tikzcd}
\]

Because the global sections functor preserves finite limits, the Freyd cover
$\breve{\mathcal{F}}$ is again a topos with $\pi_1$ a \emph{logical
functor}~\citep[Example~2.1.12]{johnstone:2002}; moreover, $\pi_1$ preserves
the natural numbers object~\citep[Corollary~7.7.2]{taylor:1999}. We also have a
functor $\pi_0:\breve{\mathcal{F}}\to\SET$, which merely preserves finite
limits.

The Freyd cover $\breve{\mathcal{F}}$ has a natural numbers object
$\breve{\mathsf{N}}$ given by $(\mathsf{N},\mathbb{N},n\mapsto\bar{n})$, where
$\bar{n}$ takes a set-theoretic natural number to the corresponding global section in
$\mathcal{F}$, and $\pi_1$ preserves the natural numbers object.
Because $\mathcal{F}$ is the initial topos with a natural numbers object, for
any other such topos $\mathcal{E}$ we have a \emph{unique} map $I_{\mathcal{E}}
: \mathcal{F}\to\mathcal{E}$.

\begin{lemma}\label{lem:gluing-retract}
  The logical functor $\pi_1:\breve{\mathcal{F}}\to\mathcal{F}$ is a retract of $I_{\breve{\mathcal{F}}}$:
  \[
    \begin{tikzcd}[sep=large,cramped]
      \mathcal{F}
      \arrow[r,"I_{\breve{\mathcal{F}}}"]
      \arrow[dr, swap,"1"]
      &
      \breve{\mathcal{F}}
      \arrow[d,"\pi_1"]
      \\
      &
      \mathcal{F}
    \end{tikzcd}
  \]
\end{lemma}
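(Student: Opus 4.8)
The plan is to mimic the strategy of Theorem~\ref{thm:eval-cod}: exploit the initiality of $\mathcal{F}$ to force the relevant composite to coincide with the identity. Concretely, I would show that $\pi_1\circ I_{\breve{\mathcal{F}}}$ is an endomorphism of $\mathcal{F}$ inside the 2-category of toposes equipped with a natural numbers object and logical functors preserving it; since $\mathcal{F}$ is initial in this setting, any such endomorphism is uniquely determined, and as $\mathbf{1}_{\mathcal{F}}$ is visibly one such endomorphism, the two must agree.

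First I would assemble the structural facts already recorded above. The functor $I_{\breve{\mathcal{F}}}:\mathcal{F}\to\breve{\mathcal{F}}$ is, by its very definition, the unique logical and NNO-preserving functor out of the initial topos $\mathcal{F}$ into $\breve{\mathcal{F}}$; here we use that $\breve{\mathcal{F}}$ is again a topos carrying the natural numbers object $\breve{\mathsf{N}}$ constructed above. The functor $\pi_1:\breve{\mathcal{F}}\to\mathcal{F}$ has already been observed to be logical and to preserve the natural numbers object. Hence the composite $\pi_1\circ I_{\breve{\mathcal{F}}}$ is itself logical and NNO-preserving, so it is a legitimate morphism $\mathcal{F}\to\mathcal{F}$ in the 2-category in question.

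Next I would invoke initiality directly. Because $\mathcal{F}$ is the initial topos with a natural numbers object, there is (up to the coherence appropriate to this setting) exactly one logical NNO-preserving self-functor $I_{\mathcal{F}}:\mathcal{F}\to\mathcal{F}$. Both $\pi_1\circ I_{\breve{\mathcal{F}}}$ and the identity $\mathbf{1}_{\mathcal{F}}$ are such functors, so both must equal $I_{\mathcal{F}}$; therefore $\pi_1\circ I_{\breve{\mathcal{F}}}=\mathbf{1}_{\mathcal{F}}$, which is precisely the asserted retraction.

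I expect the only genuine subtlety — the \emph{main obstacle} — to be bookkeeping about what \emph{unique} means for the initial topos. Strictly speaking, initiality delivers uniqueness only up to canonical isomorphism of logical functors, so to obtain the on-the-nose equality drawn in the triangle one must either work with a suitably strict presentation of the free topos (exactly as we implicitly do for $\ClCat{\Sigma}$ in Theorem~\ref{thm:eval-cod}) or else read the retraction as holding up to natural isomorphism. Granting the strict reading used throughout these notes, the argument is otherwise purely formal and involves no computation.
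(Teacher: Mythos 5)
Your proposal is correct and is essentially the paper's own argument: the paper likewise notes that both $\pi_1\circ I_{\breve{\mathcal{F}}}$ and the identity are endofunctors of the initial topos $\mathcal{F}$ (logical and NNO-preserving, by the facts established just before the lemma), so initiality forces them to coincide. Your closing remark about strictness versus uniqueness-up-to-isomorphism is a sensible elaboration of a point the paper leaves implicit, but it does not change the substance of the argument.
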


\begin{proof}
  We have the ``additional'' identity morphism $1:\mathcal{F}\to\mathcal{F}$, so by initiality of $\mathcal{F}$, we must have $1=\pi_1\circ{}I_{\breve{\mathcal{F}}}$.
\end{proof}

\begin{theorem}[Natural number canonicity]
  Any global section $n:\Gamma(\mathsf{N})$ in $\mathcal{F}$ is equal to some
  numeral $\bar{k}$.
\end{theorem}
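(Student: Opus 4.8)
The plan is to transport the given global section through the canonical comparison functor $I_{\breve{\mathcal{F}}}:\mathcal{F}\to\breve{\mathcal{F}}$ and then read off a numeral from its component in $\SET$. Regard $n$ as a morphism $n:\mathbf{1}\to\mathsf{N}$ in $\mathcal{F}$. First I would use that $I_{\breve{\mathcal{F}}}$ is a logical functor preserving the natural numbers object: it therefore sends the terminal object $\mathbf{1}$ to the terminal object $\breve{\mathbf{1}}=(\mathbf{1},\mathbf{1},\mathsf{id})$ of $\breve{\mathcal{F}}$ (using that $\Gamma$ preserves the terminal object, so $\Gamma(\mathbf{1})\cong\mathbf{1}$), and sends $\mathsf{N}$ to the explicit natural numbers object $\breve{\mathsf{N}}=(\mathsf{N},\mathbb{N},k\mapsto\bar{k})$. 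Hence $I_{\breve{\mathcal{F}}}(n)$ is a morphism $\breve{\mathbf{1}}\to\breve{\mathsf{N}}$ in the Freyd cover.

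Next I would unfold what such a morphism is as a datum of the gluing construction. By definition it is a commuting square whose $\mathcal{F}$-component is $\pi_1(I_{\breve{\mathcal{F}}}(n))$ and whose $\SET$-component is $\pi_0(I_{\breve{\mathcal{F}}}(n))$. By Lemma~\ref{lem:gluing-retract} we have $\pi_1\circ I_{\breve{\mathcal{F}}}=\mathbf{1}_{\mathcal{F}}$, so the $\mathcal{F}$-component is exactly $n$ itself. Applying the finite-limit-preserving functor $\pi_0$ and using $\pi_0(\breve{\mathbf{1}})=\mathbf{1}$ and $\pi_0(\breve{\mathsf{N}})=\mathbb{N}$, the $\SET$-component is a function $\mathbf{1}\to\mathbb{N}$, that is, a single natural number $k\in\mathbb{N}$.

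Finally I would extract the desired equation $n=\bar{k}$ from the commuting-square (comma) condition. The structure map of $\breve{\mathsf{N}}$ is precisely $k\mapsto\bar{k}$ and the structure map of $\breve{\mathbf{1}}$ is the identity on $\Gamma(\mathbf{1})\cong\mathbf{1}$; chasing the unique element of $\mathbf{1}$ around the square, the top-then-right path yields $\bar{k}$ while the left-then-bottom path yields $\Gamma(n)$ applied to the unique global section of $\mathbf{1}$, namely $n$ itself regarded as a global section of $\mathsf{N}$. Commutativity forces $n=\bar{k}$, which is the claim.

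The hard part will be purely a matter of bookkeeping: correctly identifying $I_{\breve{\mathcal{F}}}(\mathsf{N})$ with the concrete object $\breve{\mathsf{N}}=(\mathsf{N},\mathbb{N},k\mapsto\bar{k})$ and $I_{\breve{\mathcal{F}}}(\mathbf{1})$ with $\breve{\mathbf{1}}$, so that the comma-category datum of $I_{\breve{\mathcal{F}}}(n)$ can be read off componentwise through $\pi_0$ and $\pi_1$. Once these identifications are in place, the identification of the $\pi_1$-component with $n$ via the retract lemma and the extraction of the numeral $\bar{k}$ from the square follow immediately from preservation properties already recorded.
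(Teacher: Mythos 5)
Your proposal is correct and is essentially the paper's own argument: both lift $n$ along $I_{\breve{\mathcal{F}}}$ (using preservation of $\mathbf{1}$ and $\mathsf{N}$) to a commuting square in $\SET$, read off the numeral $k$ from the upstairs map $\mathbf{1}\to\mathbb{N}$, and conclude $n=\bar{k}$ from commutativity together with Lemma~\ref{lem:gluing-retract}. The only difference is that you spell out the diagram chase and the identification of the components under $\pi_0$ and $\pi_1$ in more detail than the paper does.
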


\begin{proof}
  The functor $I_{\breve{\mathcal{F}}}$ necessarily preserves $\mathsf{N}$.
  Therefore, the global section $n$ lifts in $\breve{\mathcal{F}}$ to a square
  in $\SET$ as follows:
  \[
    \begin{tikzcd}[sep=large,cramped]
      \mathbf{1}
      \arrow[r]
      \arrow[d]
      &
      \mathbb{N}
      \arrow[d]
      \\
      \Gamma(\mathbf{1})\cong\mathbf{1}
      \arrow[r]
      &
      \Gamma(\mathsf{N})
    \end{tikzcd}
  \]

  The upstairs morphism gives us a numeral $k$; because the diagram commutes
  and using Lemma~\ref{lem:gluing-retract}, we have $n=\bar{k}$.

\end{proof}

\begin{theorem}[Existence property]\label{thm:existence-property}

  Suppose that $\mathcal{F}\models\exists x:X.\phi(x)$; then there is a global
  element $\alpha:\mathbf{1}\to{}X$ in $\mathcal{F}$ such that
  $\mathcal{F}\models\phi(\alpha)$.

\end{theorem}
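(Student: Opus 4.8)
The plan is to push the internal existential statement into the Freyd cover along the logical functor $I_{\breve{\mathcal{F}}}$, extract a witness from the $\SET$-valued component of the glued object, and then read that witness back into $\mathcal{F}$ using the comma structure map together with the retraction $\pi_1\circ I_{\breve{\mathcal{F}}}=1$ of Lemma~\ref{lem:gluing-retract}.

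First I would restate the hypothesis geometrically: $\mathcal{F}\models\exists x:X.\phi(x)$ holds exactly when the subobject $\llbracket\phi\rrbracket\rightarrowtail X$ is well-supported, i.e.\ when the composite $\llbracket\phi\rrbracket\rightarrowtail X\to\mathbf{1}$ is an epimorphism in $\mathcal{F}$. Because $I_{\breve{\mathcal{F}}}$ is logical and preserves the natural numbers object, it preserves the entire interpretation of the higher-order internal language over $\mathsf{N}$; in particular it carries $\llbracket\phi\rrbracket\rightarrowtail X$ to its denotation in $\breve{\mathcal{F}}$ and preserves epimorphisms, so $I_{\breve{\mathcal{F}}}\llbracket\phi\rrbracket\rightarrowtail I_{\breve{\mathcal{F}}}X\to\mathbf{1}$ is again epic, witnessing $\breve{\mathcal{F}}\models\exists x.(I_{\breve{\mathcal{F}}}\phi)(x)$.

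Next I would apply the projection $\pi_0:\breve{\mathcal{F}}\to\SET$. The crucial structural observation is that $\pi_0$ is a \emph{left} adjoint: its right adjoint sends a set $T$ to the glued object $(T,\mathbf{1},!)$, since a morphism $(\mathcal{D},A,q)\to(T,\mathbf{1},!)$ is just a function $\mathcal{D}\to T$ (the $\mathcal{F}$-component $A\to\mathbf{1}$ is forced and the square commutes automatically). Consequently $\pi_0$ preserves epimorphisms, and since it also preserves finite limits it sends the terminal object of $\breve{\mathcal{F}}$ to $\mathbf{1}$. Applying $\pi_0$ to the epimorphism above therefore produces a surjection $\pi_0\Parens{I_{\breve{\mathcal{F}}}\llbracket\phi\rrbracket}\twoheadrightarrow\mathbf{1}$ in $\SET$, so the set $\pi_0\Parens{I_{\breve{\mathcal{F}}}\llbracket\phi\rrbracket}$ is inhabited; I fix an element $a$ of it.

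Finally I would read $a$ back down into $\mathcal{F}$. Writing $I_{\breve{\mathcal{F}}}\llbracket\phi\rrbracket=(S,A,q:S\to\Gamma A)$, Lemma~\ref{lem:gluing-retract} gives $A=\pi_1\Parens{I_{\breve{\mathcal{F}}}\llbracket\phi\rrbracket}=\llbracket\phi\rrbracket$, so the comma structure map is a function $q:S\to\Gamma\llbracket\phi\rrbracket=\mathcal{F}(\mathbf{1},\llbracket\phi\rrbracket)$; thus $q(a)$ is a genuine global element $\mathbf{1}\to\llbracket\phi\rrbracket$ in $\mathcal{F}$. Composing with the inclusion $\llbracket\phi\rrbracket\rightarrowtail X$ yields a global element $\alpha:\mathbf{1}\to X$ that factors through $\llbracket\phi\rrbracket$, and hence $\mathcal{F}\models\phi(\alpha)$, as required. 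I expect the load-bearing step to be the adjointness of $\pi_0$ used to transfer well-supportedness in $\breve{\mathcal{F}}$ to inhabitedness of the set component downstairs, where the initiality and gluing setup does its real work; by contrast, the transfer of validity along $I_{\breve{\mathcal{F}}}$ is routine once one recalls that logical, $\mathsf{N}$-preserving functors respect the internal logic.
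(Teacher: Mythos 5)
Your proof is correct and takes essentially the same route as the paper's: push the existential into the Freyd cover along the logical functor $I_{\breve{\mathcal{F}}}$, extract an element of the $\SET$-component of the lifted epimorphism, and descend through the comma structure map (together with Lemma~\ref{lem:gluing-retract}) to a global section in $\mathcal{F}$ witnessing $\phi$. Your two deviations are refinements rather than a different strategy: you take the canonical cover $\Sem{\phi}\rightarrowtail X$ in place of the arbitrary cover $V$ supplied by the Kripke--Joyal unfolding (which lets the witness factor through the subobject directly, avoiding the paper's final monotonicity step), and you justify that $\pi_0$ preserves epimorphisms by exhibiting its right adjoint $T\mapsto(T,\mathbf{1},!)$, a fact the paper leaves implicit here and supports by citation only in the disjunction property proof.
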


\begin{proof}

  We will use the Kripke-Joyal semantics of the
  topos~\citep{maclane-moerdijk:1992}; unwinding our assumption
  $\mathbf{1}\Vdash\exists x:X.\phi(x)$, we have that there exists an
  epimorphism $p:V\twoheadrightarrow\mathbf{1}$ and a morphism $\beta:V\to{}X$
  such that $V\Vdash\phi(\beta)$.\footnote{Please note that the symbol $\Vdash$
  here denotes the forcing relation, rather than the gluing relation.}

  The logical functor
  $I_{\breve{\mathcal{F}}}:\mathcal{F}\to\breve{\mathcal{F}}$ lifts $p$ to an epimorphism $I_{\breve{\mathcal{F}}}(p) :
  I_{\breve{\mathcal{F}}}(V)\twoheadrightarrow{}I_{\breve{\mathcal{F}}}(\mathbf{1})$ in $\mathcal{F}$. Since $I_{\breve{\mathcal{F}}}$ preserves
  the terminal object, this is actually to say $I_{\breve{\mathcal{F}}}(p) :
  I_{\breve{\mathcal{F}}}(V)\twoheadrightarrow{}\mathbf{1}$ in $\breve{\mathcal{F}}$.  $I_{\breve{\mathcal{F}}}(p)$ must be
  a square in $\SET$ of the following kind:
  \[
    \begin{tikzcd}[sep=large,cramped]
      \pi_0(I_{\breve{\mathcal{F}}}(V))
      \arrow[r, two heads]
      \arrow[d]
      &
      \mathbf{1}
      \arrow[d]
      \\
      \Gamma(V)
      \arrow[r]
      &
      \Gamma(\mathbf{1})
    \end{tikzcd}
  \]

  Because the upstairs morphism is a surjection, we know that $\pi_0(I_{\breve{\mathcal{F}}}(V))$ is non-empty;
  therefore, because we have a map $\pi_0(I_{\breve{\mathcal{F}}}(V))\to\Gamma(V)$, we can see that
  $\Gamma(V)$ is non-empty, i.e.\ we have a global section of $\rho :
  \mathbf{1}\to{}V$ in $\mathcal{F}$.
  By precomposition and Kripke-Joyal monotonicity, then, we have a global
  section $\beta\circ\rho:\mathbf{1}\to{}X$ such that
  $\mathbf{1}\Vdash\phi(\beta\circ\rho)$.
\end{proof}

\begin{theorem}[Disjunction property]

  Suppose that $\mathcal{F}\models\phi(\alpha)\lor\psi(\alpha)$ for some
  $\alpha:\mathbf{1}\to{}X$; then either $\mathcal{F}\models\phi(\alpha)$ or
  $\mathcal{F}\models\psi(\alpha)$.

\end{theorem}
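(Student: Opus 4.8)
The plan is to reprise the strategy of the existence property (Theorem~\ref{thm:existence-property}), transporting the disjunction across the logical functor $I_{\breve{\mathcal{F}}} : \mathcal{F}\to\breve{\mathcal{F}}$ and reading off the answer downstairs in $\SET$ via $\pi_0$. First I would unwind the hypothesis using the Kripke-Joyal semantics of $\mathcal{F}$: the judgment $\mathbf{1}\Vdash\phi(\alpha)\lor\psi(\alpha)$ means that there is a jointly epimorphic pair $p_1:U_1\to\mathbf{1}$ and $p_2:U_2\to\mathbf{1}$ — equivalently, an epimorphism $[p_1,p_2]:U_1+U_2\twoheadrightarrow\mathbf{1}$ — such that $U_1\Vdash\phi(\alpha\circ p_1)$ and $U_2\Vdash\psi(\alpha\circ p_2)$.

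Next I would push this covering into the Freyd cover. Because $I_{\breve{\mathcal{F}}}$ is logical it preserves coproducts, epimorphisms and the terminal object, so $I_{\breve{\mathcal{F}}}([p_1,p_2])$ is an epimorphism $I_{\breve{\mathcal{F}}}(U_1)+I_{\breve{\mathcal{F}}}(U_2)\twoheadrightarrow\mathbf{1}$ in $\breve{\mathcal{F}}$. Applying $\pi_0:\breve{\mathcal{F}}\to\SET$ yields a surjection in $\SET$ with codomain $\pi_0(\mathbf{1})\cong\mathbf{1}$, exactly as in the existence property proof; hence the domain $\pi_0\Parens{I_{\breve{\mathcal{F}}}(U_1)+I_{\breve{\mathcal{F}}}(U_2)}$ is non-empty. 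The crucial extra observation, absent from the existence argument, is that coproducts in the gluing category are computed componentwise, so $\pi_0$ preserves them; thus the non-empty set splits as $\pi_0\Parens{I_{\breve{\mathcal{F}}}(U_1)}+\pi_0\Parens{I_{\breve{\mathcal{F}}}(U_2)}$, and at least one of the two summands — say the first — is non-empty.

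Finally, from non-emptiness of $\pi_0\Parens{I_{\breve{\mathcal{F}}}(U_1)}$ I would recover, exactly as before, a global section $\rho:\mathbf{1}\to U_1$ in $\mathcal{F}$ by composing with the structure map $\pi_0\Parens{I_{\breve{\mathcal{F}}}(U_1)}\to\Gamma(U_1)$ and invoking Lemma~\ref{lem:gluing-retract}. Since $p_1\circ\rho:\mathbf{1}\to\mathbf{1}$ is necessarily the identity, we have $\alpha\circ p_1\circ\rho=\alpha$, and Kripke-Joyal monotonicity upgrades $U_1\Vdash\phi(\alpha\circ p_1)$ to $\mathbf{1}\Vdash\phi(\alpha)$; the symmetric case gives $\mathbf{1}\Vdash\psi(\alpha)$. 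I expect the main obstacle to be precisely the step that distinguishes this proof from the existence property, namely verifying that $\pi_0$ preserves binary coproducts — that the coproduct of two gluing objects $(S,A,m)$ and $(S',A',m')$ has underlying set $S+S'$ — so that a global inhabitant of the glued coproduct can be \emph{localized} to one disjunct rather than genuinely straddling both.
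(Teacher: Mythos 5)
Your proposal is correct and follows essentially the same route as the paper's proof: unwind the disjunction via Kripke--Joyal into an epimorphism $[p,q]:V+W\twoheadrightarrow\mathbf{1}$, transport it along the logical functor $I_{\breve{\mathcal{F}}}$, use the fact that $\pi_0$ preserves coproducts to split the resulting non-empty set and extract a global section of $V$ or of $W$, then conclude by monotonicity. The only cosmetic differences are that you justify the preservation of coproducts by the componentwise computation in the comma category where the paper cites \citet[Proposition~7.7.1(l)]{taylor:1999}, and you spell out the steps (the appeal to Lemma~\ref{lem:gluing-retract} and the observation that $p\circ\rho=\mathrm{id}_{\mathbf{1}}$) that the paper leaves implicit.
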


\begin{proof}

  We will use essentially the same technique as in our proof of
  Theorem~\ref{thm:existence-property}. Unwinding the Kripke-Joyal semantics of
  the topos, we have morphisms $p:V\to\mathbf{1}$ and $q:W\to\mathbf{1}$ such
  that $[p,q]:V+W\to\mathbf{1}$ is an epimorphism and moreover
  $V\Vdash\phi(\alpha\circ{}p)$ and $W\Vdash\psi(\alpha\circ{}q)$. As above,
  $[p,q]$ lifts to an epimorphism in $\breve{\mathcal{F}}$ as follows:
  \[
    \begin{tikzcd}[sep=large,cramped]
      \pi_0(I_{\breve{\mathcal{F}}}(V+W))=\pi_0(I_{\breve{\mathcal{F}}}(V))+\pi_0(I_{\breve{\mathcal{F}}}(W))
      \arrow[r, two heads]
      \arrow[d]
      &
      \mathbf{1}
      \arrow[d]
      \\
      \Gamma(V+W)=\Gamma(V)+\Gamma(W)
      \arrow[r]
      &
      \mathbf{1}
    \end{tikzcd}
  \]

  Note that the global sections functor for $\mathcal{F}$ preserves finite
  colimits, and $\pi_0$ preserves all
  colimits~\citep[Proposition~7.7.1(l)]{taylor:1999}. Because the upstairs
  morphism is a surjection, we know that $\pi_0(I_{\breve{\mathcal{F}}}(V))+\pi_0(I_{\breve{\mathcal{F}}}(W))$ is
  non-empty, whence we must have either a global section $r\in\Gamma(V)$ or a
  global section $s\in\Gamma(W)$.

  Supposing we have a global section $r:\mathbf{1}\to V$ in $\mathcal{F}$, then
  by Kripke monotonicity, we have
  $\mathbf{1}\Vdash\phi(\alpha\circ{}p\circ{}r)$. On the other hand, if we have
  a global section $s:\mathbf{1}\to{}W$, then we would have
  $\mathbf{1}\Vdash\psi(\alpha\circ{}q\circ{}s)$.

\end{proof}

\subsection{Connection with the method of computability}

As we have alluded to in the previous section, the gluing category always
functions as the ``category of suitable logical predicates'', with the meaning
of ``suitable'' negotiated by choice of gluing functor. Most instances of the
logical relations/predicates technique can be phrased as an instance of the
more general gluing construction.

\subsubsection{Proof (ir)relevance}

The native notion of logical ``predicate'' which is induced by the gluing
construction is a proof-relevant one, whereas in the method of computability,
one generally studies predicates in the classical, proof-irrelevant sense. This
restriction is easily accounted for by making a slight adjustment to the
categories involved.

Writing $\mathbb{C}$ for the classifying category of our theory, if we take a
category $\mathcal{E}$ to be our semantic domain, we can form categories of
proof-relevant logical predicates and proof irrelevant logical predicates
respectively along a functor $F:\mathbb{C}\to\mathcal{E}$ as
follows:\footnote{For intuition, consider the specific example where
$\mathcal{E}$ is $\SET$ and $\mathcal{F}$ is the global sections functor, as in
Section~\ref{sec:freyd-cover}.}
\begin{mathpar}
  \begin{tikzcd}[sep=large,cramped]
    {\mathsf{Gl}}
    \arrow[r]
    \arrow[d, -{Triangle[open]}]
    \arrow[dr, phantom, pos = 0, "\lrcorner"]
    &
    \mathcal{E}^{\to}
    \arrow[d,-{Triangle[open]}, "\Cod"]
    \\
    \mathbb{C}
    \arrow[r, swap, "F"]
    &
    \mathcal{E}
  \end{tikzcd}
  \and
  \begin{tikzcd}[sep=small,cramped]
    {\mathsf{Gl}_{\mathit{irr}}}
    \arrow[rr]
    \arrow[dd]
    \arrow[drr, phantom, pos = 0, "\lrcorner"]
    &&
    \mathcal{E}^{\to}_{\mathit{mono}}
    \arrow[d]
    \\
    &&
    \mathcal{E}^{\to}
    \arrow[d,-{Triangle[open]},"\Cod"]
    \\
    \mathbb{C}
    \arrow[rr, swap, "F"]
    &&
    \mathcal{E}
  \end{tikzcd}
\end{mathpar}

When $F$ is the global sections functor (and thence $\mathsf{Gl}$ is the Freyd
cover or the scone of $\mathbb{C}$), $\mathsf{Gl}_{\mathit{irr}}$ is often
called the ``subscone'' of $\mathbb{C}$.

\subsubsection{Relations vs predicates}

What we have seen so far corresponds to the technique of unary logical
relations, but the abstraction scales easily to the case of binary (and
$n$-ary) logical relations by replacing $\mathbb{C}$ with
$\mathbb{C}\times\mathbb{D}$, as described in~\citet{mitchell-scedrov:1993}. To
see the connection with binary logical relations, it will be instructive to
work out explicitly the case for exponentials in the subscone of
$\mathbb{C}\times\mathbb{C}$, which we will write
$\widetilde{\mathbb{C}\times\mathbb{C}}$.

First, observe that the exponential in the product of two cartesian closed
categories is calculated pointwise; so for $(A_0,A_1), (B_0,B_1) :
\mathbb{C}\times\mathbb{C}$, we have ${(B_0,B_1)}^{(A_0,A_1)} = ({B_0}^{A_0},
{B_1}^{A_1})$.

An object in $\widetilde{\mathbb{C}\times\mathbb{C}}$ is a monomorphism
$R\rightarrowtail\Gamma(A,B)$ where $\Gamma$ is the global sections functor for
$\mathbb{C}\times\mathbb{C}$. Because the global sections functor preserves
finite limits, this is to say that we have a monomorphism
$R\rightarrowtail\Gamma(A)\times\Gamma(B)$, in other words \emph{a relation on
the closed terms of type $A$ and $B$ in the language $\mathbb{C}$}.

We wish to inspect for ourselves the exponential object in
$\widetilde{\mathbb{C}\times\mathbb{C}}$. As we saw earlier on, to form the
exponential in the gluing category we first take the following pullback:
\[
  \begin{tikzcd}[sep=large]
    {E}
    \arrow[dr, phantom, pos = 0, "\lrcorner"]
    \arrow[r]
    \arrow[d, tail]
    &
    {S^R}
    \arrow[d, tail]
    \\
    {\Gamma\Parens{{B_0}^{A_0}}\times\Gamma\Parens{{B_1}^{A_1}}}
    \arrow[r]
    &
    \Parens{\Gamma(B_0,B_1)}^R
  \end{tikzcd}
\]

Then, we define the exponential
${\Parens{S\rightarrowtail\Gamma(B_0)\times\Gamma(B_1)}}^{\Parens{R\rightarrowtail\Gamma(A_0)\times\Gamma(A_1)}}$
to be the monomorphism on the left. Now, unfolding definitions, a global element of this
exponential is simply a pair of closed terms $\cdot\vdash{}F_0:A_0\to{}B_0$ and
$\cdot\vdash{}F_1:A_1\to{}B_1$ together with a function $H : S^R$ which is
tracked by $(F_0,F_1)$; unwinding further, this means only that for all
$\cdot\vdash{}a_0:A_0$ and $\cdot\vdash{}a_1:A_1$, if $(a_0,a_1)\in{}R$, then
$(F_0(a_0), F_1(a_1))\in{}S$.

\subsubsection{Kripke/Beth/Grothendieck logical relations}

A common generalization of the method of computability is to use a logical
relation which is indexed in some partial order (or even a category), subject
to a functoriality condition. In the literature, these are called \emph{Kripke
logical relations}, and indeed, the construction that we used to prove
normalization of free $\lambda$-theories in Section~\ref{sec:nbg} is the
proof-relevant unary Kripke instance of the gluing abstraction, where the
worlds are contexts of variables linked by renamings.

Many other variations of indexed logical relations appear in the wild, and
nearly all of these are already accounted for within the abstraction.  For
instance, by imposing Grothendieck topology on the base poset or category and
requiring a \emph{local character} condition in addition to monotonicity, one
can develop something which might be called \emph{Beth/Grothendieck logical
relations} (see~\citet{coquand-mannaa:2016},
\citet{altenkirch-dybjer-hofmann-scott:2001} and \citet{fiore-simpson:1999} for
examples).

\begin{remark}[Terminology]
  In the literature~\citep{jung-tiuryn:1993,fiore-simpson:1999,fiore:2002}, the
  proof irrelevant version of this construction appears under the somewhat
  confusing name ``Kripke Relations of Varying Arity''---confusing because it is
  not immediately clear what it has to do with the arity of a relation.

  In the early literature (such as~\citet{jung-tiuryn:1993}), there was
  some resistance to explaining what these were in a more conceptual way, but
  as described in~\citet{fiore-simpson:1999}, these have a simple
  characterization as \emph{internal relations} of a certain kind within a
  presheaf topos which corresponds exactly to a proof irrelevant version of the
  construction we describe in these notes.
\end{remark}

\begin{example}[Independence of Markov's Principle]

  In~\citet{coquand-mannaa:2016}, the method of computability was used to
  establish the independence of Markov's Principle from Martin-L\"of Type
  Theory using a forcing extension over Cantor space $\mathcal{C}$. We will
  briefly describe how the construction in that paper fits into the framework
  of gluing.

  Letting $\mathbb{C}$ be the classifying category of the forcing extension of
  type theory, we have a fibration $\pi_{\mathcal{C}}:\mathbb{C}\to\mathcal{C}$
  which projects the forcing condition (a representation of compact open in
  Cantor space). Writing $\Sh{\mathcal{C}}$ for the topos of sheaves over
  Cantor space, we evidently have a functor $\TM:\mathbb{C}\times\mathbb{C}\to\Sh{\mathcal{C}}$
  which takes a pair of contexts $(\Delta_0,\Delta_1)$ to the following sheaf:
  \[
    (p : \mathcal{C}) \mapsto
      \SetCompr{
        (\delta_0,\delta_1)
      }{
        p \leq \pi_{\mathcal{C}}(\Delta_0)
        \land
        p \leq \pi_{\mathcal{C}}(\Delta_1)
        \land
        {\cdot\mathrel{\vdash_p}\delta_0:\Parens{\Delta_0}_{\vert{}p}}
        \land
        {\cdot\mathrel{\vdash_p}\delta_1:\Parens{\Delta_1}_{\vert{}p}}
      }
  \]

  (The above is a sheaf, because the topology on $\mathcal{C}$ is subcanonical,
  and because the calculus contains a rule for local character.)

  Now, consider the gluing category obtained from the following pullback:
  \[
    \begin{tikzcd}[sep=large,cramped]
      {\mathsf{Gl}}
      \arrow[r]
      \arrow[dd]
      \arrow[ddr, phantom, pos = 0, "\lrcorner"]
      &
      \Sh{\mathcal{C}}^{\to}_{\mathit{mono}}
      \arrow[d]
      \\
      &
      \Sh{\mathcal{C}}^{\to}
      \arrow[d,-{Triangle[open]}, "\Cod"]
      \\
      \mathbb{C}\times\mathbb{C}
      \arrow[r, swap, "F"]
      &
      \Sh{\mathcal{C}}
    \end{tikzcd}
  \]

  Viewed externally, the objects of $\mathsf{Gl}$ are $\mathcal{C}$-indexed
  binary relations on closed terms in $\mathbb{C}$ which enjoy both
  monotonicity and local character. By examining the cartesian closed structure
  of $\mathsf{Gl}$, it can be seen (as above) that the logical relations for
  each connective match the na\"ive ones.

\end{example}

%
%
%
%
%
%

\section*{Acknowledgments}

We are thankful to Jonas Frey for explaining the Freyd cover construction for
the free topos; and to Marcelo Fiore, Michael Shulman and Darin Morrison for
helpful conversations about conceptual proofs of normalization. Thanks also to
Daniel Gratzer and Lars Birkedal for their comments on a draft of this note.

The authors gratefully acknowledge the support of the Air Force Office of
Scientific Research through MURI grant FA9550-15-1-0053 and the AFOSR project
`Homotopy Type Theory and Probabilistic Computation', 12595060. Any opinions,
findings and conclusions.  Any opinions, findings and conclusions or
recommendations expressed in this material are those of the authors and do not
necessarily reflect the views of the AFOSR.%

\nocite{coquand-dybjer:1997, shulman:2015, crole:1993, dagand-scherer:2015, nlab:free-topos, nlab:freyd-cover}
\nocite{clairambault-dybjer:2015}
\bibliographystyle{plainnat}
\bibliography{references/refs}

\begin{thebibliography}{29}
\providecommand{\natexlab}[1]{#1}
\providecommand{\url}[1]{\texttt{#1}}
\expandafter\ifx\csname urlstyle\endcsname\relax
  \providecommand{\doi}[1]{doi: #1}\else
  \providecommand{\doi}{doi: \begingroup \urlstyle{rm}\Url}\fi

\bibitem[Altenkirch et~al.(2001)Altenkirch, Dybjer, Hofmann, and
  Scott]{altenkirch-dybjer-hofmann-scott:2001}
T.~Altenkirch, P.~Dybjer, M.~Hofmann, and P.~Scott.
\newblock Normalization by evaluation for typed lambda calculus with
  coproducts.
\newblock In \emph{{Proceedings of the 16th Annual IEEE Symposium on Logic in
  Computer Science}}, LICS '01, pages 303--, Washington, DC, USA, 2001. IEEE
  Computer Society.
\newblock URL \url{http://dl.acm.org/citation.cfm?id=871816.871869}.

\bibitem[Altenkirch et~al.(1995)Altenkirch, Hofmann, and
  Streicher]{altenkirch-hofmann-streicher:1995}
Thorsten Altenkirch, Martin Hofmann, and Thomas Streicher.
\newblock Categorical reconstruction of a reduction free normalization proof.
\newblock In David Pitt, David~E. Rydeheard, and Peter Johnstone, editors,
  \emph{Category Theory and Computer Science}, pages 182--199, Berlin,
  Heidelberg, 1995. Springer Berlin Heidelberg.
\newblock ISBN 978-3-540-44661-3.

\bibitem[Clairambault and Dybjer(2015)]{clairambault-dybjer:2015}
Pierre Clairambault and Peter Dybjer.
\newblock Game semantics and normalization by evaluation.
\newblock In Andrew Pitts, editor, \emph{Foundations of Software Science and
  Computation Structures}, pages 56--70, Berlin, Heidelberg, 2015. Springer
  Berlin Heidelberg.
\newblock ISBN 978-3-662-46678-0.

\bibitem[Coquand and Dybjer(1997)]{coquand-dybjer:1997}
Thierry Coquand and Peter Dybjer.
\newblock Intuitionistic model constructions and normalization proofs.
\newblock \emph{Mathematical. Structures in Comp. Sci.}, 7\penalty0
  (1):\penalty0 75--94, February 1997.
\newblock ISSN 0960-1295.
\newblock \doi{10.1017/S0960129596002150}.
\newblock URL \url{http://dx.doi.org/10.1017/S0960129596002150}.

\bibitem[Coquand and Mannaa(2016)]{coquand-mannaa:2016}
Thierry Coquand and Bassel Mannaa.
\newblock {The Independence of Markov’s Principle in Type Theory}.
\newblock In Delia Kesner and Brigitte Pientka, editors, \emph{1st
  International Conference on Formal Structures for Computation and Deduction
  (FSCD 2016)}, volume~52 of \emph{Leibniz International Proceedings in
  Informatics (LIPIcs)}, pages 17:1--17:18, Dagstuhl, Germany, 2016. Schloss
  Dagstuhl--Leibniz-Zentrum fuer Informatik.
\newblock ISBN 978-3-95977-010-1.
\newblock \doi{10.4230/LIPIcs.FSCD.2016.17}.
\newblock URL \url{http://drops.dagstuhl.de/opus/volltexte/2016/5993}.

\bibitem[Crole(1993)]{crole:1993}
R.L. Crole.
\newblock \emph{Categories for Types}.
\newblock Cambridge Mathematical Textbooks. Cambridge University Press, New
  York, 1993.
\newblock ISBN 9780521457019.

\bibitem[Dagand and Scherer(2015)]{dagand-scherer:2015}
Pierre-{\'E}variste Dagand and Gabriel Scherer.
\newblock {Normalization by realizability also evaluates}.
\newblock In David Baelde and Jade Alglave, editors, \emph{{Vingt-sixi{\`e}mes
  Journ{\'e}es Francophones des Langages Applicatifs (JFLA 2015)}}, Le Val
  d'Ajol, France, January 2015.
\newblock URL \url{https://hal.inria.fr/hal-01099138}.

\bibitem[Dybjer and Moeneclaey(2018)]{dybjer-moeneclaey:2018}
Peter Dybjer and Hugo Moeneclaey.
\newblock Finitary higher inductive types in the groupoid model.
\newblock \emph{Electronic Notes in Theoretical Computer Science},
  336:\penalty0 119 -- 134, 2018.
\newblock ISSN 1571-0661.
\newblock \doi{https://doi.org/10.1016/j.entcs.2018.03.019}.
\newblock URL
  \url{http://www.sciencedirect.com/science/article/pii/S1571066118300227}.
\newblock The Thirty-third Conference on the Mathematical Foundations of
  Programming Semantics (MFPS XXXIII).

\bibitem[Fiore(2002)]{fiore:2002}
Marcelo Fiore.
\newblock Semantic analysis of normalisation by evaluation for typed lambda
  calculus.
\newblock In \emph{Proceedings of the 4th ACM SIGPLAN International Conference
  on Principles and Practice of Declarative Programming}, PPDP '02, pages
  26--37, New York, NY, USA, 2002. ACM.
\newblock ISBN 1-58113-528-9.
\newblock \doi{10.1145/571157.571161}.
\newblock URL \url{http://doi.acm.org/10.1145/571157.571161}.

\bibitem[Fiore and Simpson(1999)]{fiore-simpson:1999}
Marcelo Fiore and Alex Simpson.
\newblock Lambda definability with sums via {Grothendieck} logical relations.
\newblock In Jean-Yves Girard, editor, \emph{Typed Lambda Calculi and
  Applications}, pages 147--161, Berlin, Heidelberg, 1999. Springer Berlin
  Heidelberg.
\newblock ISBN 978-3-540-48959-7.

\bibitem[Fiore et~al.(1999)Fiore, Plotkin, and Turi]{fiore-plotkin-turi:1999}
Marcelo Fiore, Gordon Plotkin, and Daniele Turi.
\newblock Abstract syntax and variable binding.
\newblock In \emph{Proceedings of the 14th Symposium on Logic in Computer
  Science}, pages 193--202, 1999.

\bibitem[Fiore(2005)]{fiore:2005}
Marcelo~P. Fiore.
\newblock Mathematical models of computational and combinatorial structures.
\newblock In \emph{Foundations of Software Science and Computational
  Structures, 8th International Conference, {FOSSACS} 2005, Held as Part of the
  Joint European Conferences on Theory and Practice of Software, {ETAPS} 2005,
  Edinburgh, UK, April 4-8, 2005, Proceedings}, pages 25--46, 2005.

\bibitem[Frey(2013)]{frey:2013}
Jonas Frey.
\newblock \emph{A fibrational study of realizability toposes}.
\newblock PhD thesis, Universit\'e Paris Diderot -- Paris 7, 2013.

\bibitem[Harper et~al.(1993)Harper, Honsell, and
  Plotkin]{harper-honsell-plotkin:1993}
Robert Harper, Furio Honsell, and Gordon Plotkin.
\newblock A framework for defining logics.
\newblock \emph{J. ACM}, 40\penalty0 (1):\penalty0 143--184, January 1993.
\newblock ISSN 0004-5411.
\newblock \doi{10.1145/138027.138060}.
\newblock URL \url{http://doi.acm.org/10.1145/138027.138060}.

\bibitem[Hofmann(1999)]{hofmann:1999}
Martin Hofmann.
\newblock Semantical analysis of higher-order abstract syntax.
\newblock In \emph{Proceedings of the 14th Annual IEEE Symposium on Logic in
  Computer Science}, LICS '99, pages 204--, Washington, DC, USA, 1999. IEEE
  Computer Society.
\newblock ISBN 0-7695-0158-3.
\newblock URL \url{http://dl.acm.org/citation.cfm?id=788021.788940}.

\bibitem[Jacobs(1999)]{jacobs:1999}
B.~Jacobs.
\newblock \emph{Categorical Logic and Type Theory}.
\newblock Number 141 in Studies in Logic and the Foundations of Mathematics.
  North Holland, Amsterdam, 1999.

\bibitem[Johnstone(2002)]{johnstone:2002}
Peter~T. Johnstone.
\newblock \emph{Sketches of an Elephant: A Topos Theory Compendium: Volumes 1
  and 2}.
\newblock Number~43 in Oxford Logical Guides. Oxford Science Publications,
  2002.

\bibitem[Jung and Tiuryn(1993)]{jung-tiuryn:1993}
Achim Jung and Jerzy Tiuryn.
\newblock A new characterization of lambda definability.
\newblock In Marc Bezem and Jan~Friso Groote, editors, \emph{Typed Lambda
  Calculi and Applications}, pages 245--257, Berlin, Heidelberg, 1993. Springer
  Berlin Heidelberg.
\newblock ISBN 978-3-540-47586-6.

\bibitem[Lambek and Scott(1986)]{lambek-scott:1986}
J.~Lambek and P.~J. Scott.
\newblock \emph{Introduction to Higher Order Categorical Logic}.
\newblock Cambridge University Press, New York, NY, USA, 1986.
\newblock ISBN 0-521-24665-2.

\bibitem[Mac~Lane and Moerdijk(1992)]{maclane-moerdijk:1992}
Saunders Mac~Lane and Ieke Moerdijk.
\newblock \emph{Sheaves in geometry and logic : a first introduction to topos
  theory}.
\newblock Universitext. Springer, New York, 1992.
\newblock ISBN 0-387-97710-4.

\bibitem[Mitchell and Scedrov(1993)]{mitchell-scedrov:1993}
John~C. Mitchell and Andre Scedrov.
\newblock Notes on sconing and relators.
\newblock In E.~B{\"o}rger, G.~J{\"a}ger, H.~Kleine~B{\"u}ning, S.~Martini, and
  M.~M. Richter, editors, \emph{Computer Science Logic}, pages 352--378,
  Berlin, Heidelberg, 1993. Springer Berlin Heidelberg.
\newblock ISBN 978-3-540-47890-4.

\bibitem[{nLab}(2015)]{nlab:freyd-cover}
{nLab}.
\newblock Freyd cover, 2015.
\newblock URL \url{https://ncatlab.org/nlab/show/Freyd+topos}.

\bibitem[{nLab}(2017)]{nlab:free-topos}
{nLab}.
\newblock free topos, 2017.
\newblock URL \url{https://ncatlab.org/nlab/show/free+topos}.

\bibitem[Shulman(2006)]{shulman:blog:scones-logical-relations}
Michael Shulman.
\newblock Scones, logical relations, and parametricity.
\newblock
  \url{https://golem.ph.utexas.edu/category/2013/04/scones_logical_relations_and_p.html},
  2006.

\bibitem[Shulman(2015)]{shulman:2015}
Michael Shulman.
\newblock Univalence for inverse diagrams and homotopy canonicity.
\newblock \emph{Mathematical Structures in Computer Science}, 25\penalty0
  (5):\penalty0 1203–1277, 2015.
\newblock \doi{10.1017/S0960129514000565}.

\bibitem[Staton(2007)]{staton:2007}
Sam Staton.
\newblock {Name-passing process calculi: operational models and structural
  operational semantics}.
\newblock Technical Report UCAM-CL-TR-688, University of Cambridge, Computer
  Laboratory, June 2007.

\bibitem[Streicher(1998)]{streicher:1998}
Thomas Streicher.
\newblock Categorical intuitions underlying semantic normalisation proofs.
\newblock In O.~Danvy and P.~Dybjer, editors, \emph{Preliminary Proceedings of
  the {APPSEM} Workshop on Normalisation by Evaluation}. Department of Computer
  Science, Aarhus University, 1998.

\bibitem[Taylor(1999)]{taylor:1999}
Paul Taylor.
\newblock \emph{Practical Foundations of Mathematics}.
\newblock Cambridge studies in advanced mathematics. Cambridge University
  Press, Cambridge, New York (N. Y.), Melbourne, 1999.
\newblock ISBN 0-521-63107-6.

\bibitem[\v{C}ubri\'{c} et~al.(1998)\v{C}ubri\'{c}, Dybjer, and
  Scott]{cubric-dybjer-scott:1998}
Djordje \v{C}ubri\'{c}, Peter Dybjer, and Philip Scott.
\newblock Normalization and the yoneda embedding.
\newblock \emph{Mathematical. Structures in Comp. Sci.}, 8\penalty0
  (2):\penalty0 153--192, April 1998.
\newblock ISSN 0960-1295.
\newblock \doi{10.1017/S0960129597002508}.
\newblock URL \url{https://doi.org/10.1017/S0960129597002508}.

\end{thebibliography}

\end{document}